\DeclareMathAlphabet{\mathpzc}{OT1}{pzc}{m}{it}
\def\i.i.d.{\buildrel {\rm i.i.d.} \over \sim}
\def\cw#1 { \overset{\mathbb{P}}{\underset{#1}{\longrightarrow}} }
\def\Real{\mathbb{R}}
\def\Natu0{\mathbb{N}_0}
\def\P#1{{\mathrm{P}}\left(#1\right)}
\def\E#1{{\mathrm E}\left[#1\right]}
\def\Var#1{{\mathrm Var}\left(#1\right)}
\def \rcov#1#2 {{\rm cov}_{#1}\left( #2\right)}
\newtheorem{lemma}{Lemma}
\newtheorem{theorem}{Theorem}
\newtheorem{corollary}{Corollary}
\newtheorem{remark}{Remark}
\newtheorem{model}{Model}
\begin{document}
	\begin{titlepage}
		
		\thispagestyle{empty}
\title{
A simple Bayesian state-space model for the collective risk model.(\currfilename.)}

%
%
%
%
%

\author{Jae Youn Ahn\corref{bbb}\fnref{thirdfoot}}
\author{Himchan Jeong \fnref{firstfoot}}
\author{Yang Lu\corref{bbb}\fnref{secondfoot}}

\cortext[bbb]{Corresponding Authors}
\fntext[thirdfoot]{Department of Statistics, Ewha Womans University, Seoul, Republic of Korea. Email: \url{jaeyahn@ewha.ac.kr}}
\fntext[firstfoot]{Department of Statistics and Actuarial Science, Simon Fraser University, BC, Canada. Email: \url{himchan_jeong@sfu.ca}}
\fntext[secondfoot]{Department of Mathematics and Statistics, Concordia University, Montreal, QC, Canada. Email: \url{yang.lu@concordia.ca}.}



\begin{abstract}
The collective risk model (CRM) for frequency and severity is an important tool for retail insurance ratemaking,  macro-level catastrophic risk forecasting, as well as operational risk in banking regulation. This model, which is initially designed for cross-sectional data, has recently been adapted to a longitudinal context 
to conduct both \textit{a priori} and \textit{a posteriori} ratemaking, through the introduction of random effects. However, so far, the random effect(s) is usually assumed static due to computational concerns, leading to predictive premium that omit the seniority of the claims.
In this paper, we propose a new CRM model with bivariate dynamic random effect process. The model is based on Bayesian state-space models. It is associated with the simple predictive mean {and closed form expression for the likelihood function,} while also allowing for the dependence between the frequency and severity components. Real data application to auto insurance is proposed to show the performance of our method.

%

\end{abstract}


\end{titlepage}


\maketitle

\textbf{Keywords:} Dependence, Posterior ratemaking, Dynamic random effects, conjugate-prior, local-level models, three-part model.

JEL Classification: C300


\vfill

\pagebreak

\section{Introduction}

The frequency-severity collective risk model (CRM) is an important tool for retail insurance ratemaking,  macro-level catastrophic risk forecasting, as well as operational risk in banking regulation.  
Among the early contributions, \citet{frees2014predictive} is the first to account for dependence between the frequency and severity components, by using the frequency as an explanatory variable in the regression equation of the severity variable, and \citet{garrido2016generalized} extended this model under the framework of exponential dispersion family.  On the other hand, \cite{czado2012mixed, shi2018pair,cossette2019collective,lee2019dependent, yang2020nonparametric,oh2021multi} use copulas to capture the dependence between the frequency and severity components.  However, so far most of the aforementioned (regression, or copula-based) models are designed for cross-sectional data only\footnote{To our knowledge, only the model of \cite{lee2019dependent} can be applied to panel data. This copula approach, however, will not be investigated in the present paper for several reasons. First, the use of copulas when count data are involved is not without debate due to potential identification failures [see \cite{genest2007primer}]. Recently, some advances have been made in the case when all the marginals are count-valued [\cite{yang2020nonparametric}]. This result, however, does not apply to mixed, frequency-severity data, which has the specificity that the severity is equal to zero, if and only if the frequency is zero.  Secondly, the copula and the random effect approaches are conceptually very different and their comparison is difficult. As \cite{frees2005credibility} put it, ``the frequentist (copula-based) perspective avoids assumption concerning the prior distribution of the latent variables. This may be seen as an advantage or disadvantage, depending on the situation."
}. Recently,  random effects have been introduced in longitudinal, or panel CRM's \citep{jeong2021generalized, lu2019flexible, jeong2020predictive, cheung2021bayesian, denuit2021wishart, oh2021predictive}.  The random effect creates serial correlation between observation of different periods, as well as (serial and cross-sectional) dependence between the frequency and severity processes.  Moreover, many of these models
 come with closed form prediction formulas.

Nevertheless, the random effects in these models are all static, or time-invariant. In  a longitudinal context, this has the downside of not distinguishing the seniority of the claims. This might be counterintuitive for many applications such as auto insurance, in which the most recent claim experiences are believed to be of better predictive power of future claims. 
One natural solution to this issue is to introduce dynamic random effect(s). This idea has already been implemented in frequency-only models \citep{pinquet2001allowance, lu2018dynamic}, but has yet to be extended to CRM's. One of the possible reasons is the computational burden.  Indeed, according to \cite{cox1981statistical}'s classification of time series,  existing dynamic random effects models employed in the frequency literature are \textit{parameter-driven}, in the sense that the random effect process has its own, exogenous dynamics.  These models are usually associated with complicated posterior premium, even in the simpler, frequency models. In the CRM framework, a second sequence of random effects is required for the severity part. This makes the computation task even more formidable, especially if the dependence between the frequency and severity components is also to be accounted for. Let us also mention the credibility approach. \cite{pinquet2001allowance} show that dynamic credibility models can be applied to get linear prediction formula for \textit{frequency-only}, or \textit{severity-only} models that account for seniority.  However, to our knowledge, no \textit{dynamic} credibility models have been proposed to account for \textit{both} the frequency and severity part in a non-trivial way, i.e.  allowing for their interdependence.


This paper proposes a dynamic, random effect based CRM model that $i)$ accounts for the cross-sectional dependence between the frequency and severity parts; $ii)$ provides the closed form solution for the a posteriori ratemaking; {$iii)$ provides the closed form expression of the likelihood function. } Our model is based on a time series literature on Bayesian exponential family state space models, which is pioneered by \cite{bayesianforecasting} in the Gaussian case, extended and popularized in a series of papers such as \cite{smith1979generalization, smith1986non} (henceforth SM) and \cite{harvey1989time} (henceforth HF) in the context of nonnegative continuous (resp. count) valued time series.  These models are also state-space (or latent factor, dynamic random effect) based, but are \textit{observation-driven} instead of being \textit{parameter-driven}, in the sense that the dynamics of the random effect process depends also on the observed frequency. SM and HF show that these models have simple forecasting formulas, when appropriate updating rules involving conjugate priors are used for the random effect process, hence the term \textit{Bayesian} state-space model.  In other words, the difference between the HF and SM's approach and the aforementioned parameter-driven state-space models is similar to the difference between GARCH and stochastic volatility models for asset return data. Both models are popular in the finance literature, but the former is much more simpler due to its closed form likelihood function. Recently, it has been shown by \citet{koopman2016predicting} that besides their computational advantages, observation-driven models are also quite competitive in terms of forecasting accuracy for various types of data, even in the presence of mis-specification error.  Such Bayesian state space models have been successfully used in many financial/economic applications, including to univariate asset returns [\cite{shephard1994local}], as well as to high-dimensional macroeconomic data [\cite{uhlig1997bayesian}], and they can be easily extended to other distributions belonging to the exponential family \citep{grunwald1993prediction, vidoni1999exponential}.

This time series tool, however, remains relatively under-explored in the actuarial literature and to our knowledge, only the count time series model of HF has been applied to claim frequency processes \citep{bolance2007greatest, abdallah2016sarmanov,boucher2018claim}.  Thus, the first aim of the paper is to introduce the SM model to the insurance literature. We do so by first reminding the HF model, and by emphasizing on the similarity of the underlying  ideas of these two models.
By doing so, we also provided a generalized version of the SM model to be applicable for the ratemaking in insurance setting.
This generalized version of the SM model is applicable to any positive valued time series and hence  might be of interest beyond the CRM framework.  Then we turn our eyes to the application of HF and SM models to CRM.
Because these two models concern only univariate processes with counts and positive values, respectively, the second aim of the paper is to combine these two frameworks into a dynamic, CRM, so that at each period, a count, frequency variable is observed with a nonnegative, severity variable, with the severity variable taking value zero if and only if the count variable is zero.  This kind of combination has, to our knowledge, never been considered before in the time series or actuarial literature. Moreover, this extension is not straightforward, because on the contrary to the univariate cases in the HF and SM models,  there is no obvious bivariate conjugate priors for the CRM, if we were to allow for dependence between the frequency and severity processes.   A similar difficulty has already been reported by \cite{abdallah2016sarmanov}, who investigate the possibility of extending the HF model to bivariate frequency processes. They propose to use a Sarmanov distribution to couple the two random effect components with gamma marginal distributions, but acknowledge that the conjugacy is lost in this bivariate case and they have to replace the posterior distribution by a more tractable \textit{approximation}.  They show that the approximation error is not large, but nonetheless non-negligible.  The solution we propose here is inspired by the cross-sectional model of \cite{garrido2016generalized}. On the one hand, we require the \textit{predictive distribution} of the bivariate random effect process to have independent components. This allows us to deduce prediction formula that are \textit{exact} and in closed form; on the other hand, this previous assumption,  however, is weaker than outright independence between the two component processes of the random effect. In particular, in our framework these two components are dependent such that the predictive distribution of the severity depends on the number of counts of the same period, in a similar way as in \cite{garrido2016generalized}.  Our final contribution is that we also discuss variants of our modeling approach, by explicitly singling out periods with zero claims and allowing for potentially different updating formulas for these periods. This leads to a \textit{three-part} model, in which not only frequency and severity are separately, but also claim and non-claim periods are distinguished.

The paper is organized as follows.  Section 2 reviews some basic results related to gamma distribution, as well as the univariate HF model. Section 3 provides a generalized version of the univariate SM model.
Section 4 introduces the CRM model and works out the prediction formulas. Section 5 discusses the link of the model with other two-part models in the econometric literature, and introduces a three-part variant that allows the updating rule of the random effect to be different depending on whether the claim frequency is zero. Section 6 proposes an empirical data illustration. Section 7 concludes.

\section{Review of the HF model}
This section provides a quick reminder of the general setting, and the preliminary results concerning HF.

\subsection{Notation and definition}\label{sec.2.1}

 For a given individual $i$, let us denote by
\begin{itemize}
\item $y_{t}^{[1]}\in\mathbb{N}$ the claim count at time $t=1,2,...$, i.e., the \textit{frequency}, and $\left\{ \mathcal{F}_t^{[1]}\right\} _{t=0}^{\infty}$
denotes the natural filtration generated by $y_{t}^{[1]}$;
\item $y_{t}^{[2]}\in\Real_0^{+}$ the aggregate claim at time $t$.
\end{itemize}
This framework, with two response variables per period, is called the frequency-severity two-part model,
and we use $\left\{ \mathcal{F}_t\right\}_{t=0}^{\infty}$ to denote the natural filtration
generated by $\left(y_{t}^{[1]}, y_{t}^{[2]}\right)$.

Throughout the paper, we denote by
\begin{itemize}
  \item ${\rm Gamma}(\alpha, \beta)$: the gamma distribution with shape parameter $\alpha$  and
rate parameters $\beta$. In other words, the mean and variance of this distribution are $\alpha/\beta$ and $\alpha/\beta^2$, respectively.
 {As a convention, we use $Y\sim{\rm Gamma}(0, \beta)$ for $\P{X=0}=1$.}
 \item ${\rm IG}(\alpha, \beta)$: the inverse gamma distribution with parameter $(\alpha, \beta)$, which is the distribution of $\frac{1}{Y}$, where $Y$ follows ${\rm Gamma}(\alpha, {\beta})$ distribution. In particular, if $\alpha>1$, then its mean exists and is equal to:
\[
\frac{\beta}{\alpha-1},
\]
and if $\alpha>2$, then the inverse gamma distribution has a finite variance:
\[
\frac{\beta^2}{(\alpha-1)^2 (\alpha-2)}.
\]
 \item ${\rm Beta}(\alpha, \beta)$: the beta distribution on $(0,1)$
%
whose mean and variance are
\[
\frac{\alpha}{\alpha+\beta}\quad\hbox{and}\quad
\frac{\alpha\beta}{(\alpha+\beta)^2 (\alpha+\beta+1)},
\]
respectively.
\item ${\rm NB}(\lambda, \Gamma)$: the negative binomial distribution
whose mean and variance are given by
\[
\lambda\quad\hbox{and}\quad \lambda + \lambda^2/\Gamma,
\]
respectively.
Note that, for $Y\sim{\rm NB}(\lambda, \Gamma)$,
the moment generating function of $Y$ is given by
\[
\E{\exp\left(zY\right)}=\left(\frac{\Gamma}{\Gamma+\lambda-\lambda\exp(z)} \right)^\Gamma, \quad |z|<1+\frac{\Gamma}{\lambda}
\]
and the probability mass function at $Y=y$ is given by
\[
f_{\rm NB}(y;\lambda, \Gamma)=
\frac{\Gamma(y+\Gamma)}{y!\Gamma(\Gamma)}\left( \frac{\Gamma}{\Gamma+\lambda}\right)^\Gamma
\left( \frac{\lambda}{\Gamma+\lambda}\right)^y.
\]

\item ${\rm GB2}(a,b,p,q)$: the GB2 distribution having the following density function at $Y=y$
\[
f_{\rm GB2}\left(y; a,b,p,q \right) =
\frac{|a|y^{ap-1}}
{b^{ap}B(p,q)\left(1+(y/b)^a \right)^{p+q}},  \quad y> 0
\]
{As a convention, we use $X\sim{\rm GB2}(a,b,0,0)$ for $\P{Y=0}=1$.}
We are particularly interested in the case when $a=1$,  which is also called generalized Pareto distribution as defined in the appendix of \citet{klugman2012loss} and widely used in the actuarial literature such as \citet{jeong2020bayesian}. In this case, the mean and variance of ${\rm GB2}(1,b,p,q)$ are conveniently given by
\[
\frac{bp}{q-1}\quad\hbox{and}\quad \frac{b^2p}{q-1}\left( \frac{p+q-1}{(q-2)(q-1)}\right).
\]
\end{itemize}

Finally, we denote the probability density/mass function at $y$ by
\[
f_{\rm Gamma}(y;\alpha, \beta), \quad f_{\rm IG}(y;\alpha, \beta), \quad
 \quad\hbox{and}\quad f_{\rm NB}(y; \lambda, \Gamma),
\]
and the corresponding cumulative distribution functions by
\[
F_{\rm Gamma}(y;\alpha, \beta), \quad F_{\rm IG}(y;\alpha, \beta), \quad
 \quad\hbox{and}\quad F_{\rm NB}(y; \lambda, \Gamma),
\]
respectively.
We also find it convenient to use $\prod$ to represent the independent copula. For example,
\[
(X,Y)\sim{\rm \prod}\left( F_{\rm Gamma}\left(\cdot;\alpha_1, \beta_1\right), F_{\rm Gamma}\left(\cdot;\alpha_2, \beta_2 \right)\right)
\]
 to denote two independent gamma distribution with
\[
X\sim {\rm Gamma}\left( \alpha_1, \beta_1\right) \quad\hbox{and}\quad
Y\sim {\rm Gamma}\left( \alpha_2, \beta_2\right).
\]

\subsection{Review of the HF model for counts}\label{sec.2.2}

First, the following lemma, called the characterization of the gamma distribution, is the key idea of the HF dynamic random effect models.
\begin{lemma}[\cite{lukacs1955characterization}]\label{lem.1}
Given independent $\theta\sim {\rm Gamma}(\alpha, \beta)$, $B\sim {\rm Beta}(q\alpha, (1-q)\alpha)$, and $q>0$ variables, we have
 \[
  \theta B \sim {\rm Gamma}(q\alpha, \beta).
  \]
\end{lemma}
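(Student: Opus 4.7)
My plan is to reduce the claim to the classical beta--gamma decomposition of independent gammas. Specifically, I would invoke (or re-derive) the following fact: if $X_1\sim{\rm Gamma}(q\alpha,\beta)$ and $X_2\sim{\rm Gamma}((1-q)\alpha,\beta)$ are independent (which implicitly requires $0<q<1$ so that both shape parameters are positive), then the sum $S:=X_1+X_2$ and the ratio $R:=X_1/(X_1+X_2)$ are \emph{independent}, with $S\sim{\rm Gamma}(\alpha,\beta)$ and $R\sim{\rm Beta}(q\alpha,(1-q)\alpha)$. Granting this, the lemma becomes essentially a one-line observation: the independent pair $(\theta,B)$ in the statement has the same joint law as $(S,R)$, and therefore $\theta B \eqd SR = X_1 \sim {\rm Gamma}(q\alpha,\beta)$.

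The only genuine computation is establishing the decomposition fact, which I would do by a change of variables. Setting $(x_1,x_2)\mapsto (s,r)=(x_1+x_2,\ x_1/(x_1+x_2))$ with inverse $(x_1,x_2)=(sr,\,s(1-r))$ and Jacobian determinant $s$, I would substitute into the product density
\[
f_{X_1,X_2}(x_1,x_2)=\frac{\beta^{q\alpha}}{\Gamma(q\alpha)}x_1^{q\alpha-1}e^{-\beta x_1}\cdot\frac{\beta^{(1-q)\alpha}}{\Gamma((1-q)\alpha)}x_2^{(1-q)\alpha-1}e^{-\beta x_2},
\]
and collect terms using $\Gamma(q\alpha)\Gamma((1-q)\alpha)=\Gamma(\alpha)\,B(q\alpha,(1-q)\alpha)$. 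Because $x_1^{q\alpha-1}x_2^{(1-q)\alpha-1}\cdot s = s^{\alpha-1}\,r^{q\alpha-1}(1-r)^{(1-q)\alpha-1}$ and $e^{-\beta(x_1+x_2)}=e^{-\beta s}$, the resulting joint density of $(S,R)$ factorizes cleanly into a ${\rm Gamma}(\alpha,\beta)$ density in $s$ and a ${\rm Beta}(q\alpha,(1-q)\alpha)$ density in $r$, which simultaneously delivers independence and both marginals.

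The main obstacle, such as it is, is purely bookkeeping in the factorization step; no deep idea is required once the correct change of variables is chosen. As an alternative route I briefly considered a moment generating function argument, conditioning on $B$ to write $\E{\exp(z\theta B)}=\E{(\beta/(\beta-zB))^\alpha}$ and then evaluating the resulting Beta integral to match the MGF $(\beta/(\beta-z))^{q\alpha}$ of ${\rm Gamma}(q\alpha,\beta)$. This works but relies on a less standard integral identity and is less transparent than the change-of-variables derivation, so I would prefer the approach above and mention the MGF route only as a cross-check.
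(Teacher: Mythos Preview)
Your proof is correct. The paper does not actually supply a proof of this lemma; it simply states the result with a citation to \cite{lukacs1955characterization} and then uses it. So there is nothing to compare against, and your change-of-variables argument via the beta--gamma decomposition $(X_1,X_2)\mapsto(X_1+X_2,\,X_1/(X_1+X_2))$ is the standard route to this fact and is carried out correctly. Your passing observation that one needs $0<q<1$ (not merely $q>0$ as stated) for the Beta parameters to both be positive is also apt; the paper uses the lemma only in that regime anyway.
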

It is easy to check that
\begin{equation}
\label{randomwalk}
\E{  \frac{1}{q} B \theta \mid \theta}=\theta.
\end{equation}
In other words, this re-scaling allows to construct recursively a positive  martingale. Moreover, a simple application of Lemma 1 shows that
  \[
  \frac{1}{q} B \theta \sim {\rm Gamma}(q\alpha, q\beta).
  \]
  In other words, the positive martingale (or random walk without drift) we construct has gamma marginal distribution for each fixed time $t$.  This kind of martingale type models is also called steady-state \citep{bayesianforecasting}, or local-level \cite{shephard1994local} in the Bayesian time series literature. Their advantages are that first, the mean of a process constructed using this technique is time-invariant. Second, in many of the aforementioned models including the HF model, this multiplicative martingale model leads to a simple forecasting formula that is an exponentially weighted moving average (EWMA).

Now we are ready to present HF model proposed in \cite{harvey1989time}.

\begin{model}[\cite{harvey1989time}]
\label{mod.0}
For constants $q\in(0,1)$, consider the stochastic process
\[
\left(y_t, \theta_t \right)_{t\in\mathbb{N}}
\]
where the univariate observations $y_t$ are counts and state variables (or random effects) $\theta_t$ are positive, with the following joint dynamics
\begin{enumerate}
  \item[i. ] (Initial condition for the state variable) At the initial date, the random effect is gamma distributed:
  \[
  \theta_0\sim {\rm Gamma}\left(\alpha_0, \beta_0\right)
  \]
	with positive parameters $\alpha_0, \beta_0$.

 \item[ii.] (Transition equation at time $t$): Assume that at a certain time $t$,  the filtering distribution $ \theta_{t-1}|y_1, \cdots, y_{t-1} $ is gamma 
 \begin{equation}\label{eq.3400}
 \theta_{t-1}|y_1, \cdots, y_{t-1} \sim {\rm Gamma}\left(   \alpha_{t-1}, \beta_{t-1}\right)
 \end{equation}
with positive parameters $\alpha_{t-1}$, $\beta_{t-1}$, and let
$B_t$ be conditionally independent with $\theta_{t-1}$ given $y_1, \cdots, y_{t-1}$, with beta marginals:
 \[
 B_t|y_1, \cdots, y_{t-1}\sim {\rm Beta}\left( q \alpha_{t-1}, (1-q) \alpha_{t-1}\right).
 \]
 Then, the new state variable $\theta_t$ is defined by
 \[
\theta_t:= \frac{\theta_{t-1}B_t}{q}.
\]
\item[iii.] (Observed variable at time $t$) For a given positive constant $\lambda>0$
which might depend on individual characteristics, the observation at time $t$, $y_t$ is drawn from the following conditional distribution:
    \[
    y_t | y_1, \cdots, y_{t-1}, \theta_1, \cdots, \theta_{t} \sim {\rm Pois}\left( \lambda_t\theta_t\right).
    \]
\end{enumerate}
\end{model}

Because the transition equation \eqref{eq.3400} at time $t$ in Model \ref{mod.0} assumes the distribution of the filtering distribution at time $t-1$ to be gamma:
 \begin{equation*}
 \theta_{t}|y_1, \cdots, y_t \sim {\rm Gamma}\left( \alpha_{t}, \beta_{t}\right), \qquad \forall t
 \end{equation*}
we need to check whether this assumption is compatible with the Bayes formula, which is usually used to compute the filtering distribution. The following result, due to \cite{harvey1989time}, confirms this is indeed the case, and is based on the Poisson-gamma conjugacy.  We also obtain, as a by-product, the predictive distribution of the random effect, which is also gamma.

 \begin{lemma} [Recursion for forecasting and filtering]\label{lem.a1}
 For a fixed time $t$, 
we have the following results under Model \ref{mod.0}
 \begin{enumerate}
 \item[i.] One-step-ahead forecasting of the random effects at time $t$ is given by
 \[
 \theta_{t}|y_1, \cdots, y_{t-1} \sim {\rm Gamma}\left(  q\alpha_{t-1}, q\beta_{t-1}\right).
 \]
 \item[ii.] The one-step-ahead forecasting density for the observations is given by
\[
y_t\,|\, y_1, \cdots, y_{t-1} \sim {\rm NB}\left( \lambda_t\frac{\alpha_{t-1}}{\beta_{t-1}}, q\alpha_{t-1}\right).
\]
   \item[iii.] The filtering distribution at time $t$ is
    \[
 \theta_{t}|y_1, \cdots, y_t \sim {\rm Gamma}\left( \alpha_{t}, \beta_{t}\right)
 \]
 with
   \[
   \begin{cases}
     \alpha_{t}:=q\alpha_{t-1}+y_t>0;\\
     \beta_{t}:=q\beta_{t-1}+\lambda_t>0;\\
   \end{cases}
   \]
 \end{enumerate}
 \end{lemma}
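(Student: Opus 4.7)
The plan is to prove the three claims in order, chaining the output of each into the input of the next, and leveraging in turn: Lemma 1 (the Lukacs characterization of the gamma), the Poisson--gamma mixture representation of the negative binomial, and the Poisson--gamma conjugacy. All computations are carried out conditional on the information set $y_1,\ldots,y_{t-1}$ (or $y_1,\ldots,y_t$ for part iii), so the random effect's transition equation \eqref{eq.3400} can be plugged in as the ``prior'' at each stage.

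For part i, I would start from the assumed filtering distribution $\theta_{t-1}\mid y_1,\ldots,y_{t-1}\sim{\rm Gamma}(\alpha_{t-1},\beta_{t-1})$ and the conditional independence of $B_t$ and $\theta_{t-1}$ given $y_1,\ldots,y_{t-1}$ with $B_t\mid y_1,\ldots,y_{t-1}\sim{\rm Beta}(q\alpha_{t-1},(1-q)\alpha_{t-1})$. Applying Lemma 1 conditionally on $y_1,\ldots,y_{t-1}$ gives $\theta_{t-1}B_t\mid y_1,\ldots,y_{t-1}\sim{\rm Gamma}(q\alpha_{t-1},\beta_{t-1})$. Since $\theta_t=\theta_{t-1}B_t/q$ and multiplying a gamma variable by the positive constant $1/q$ divides the rate parameter by $q$, I obtain $\theta_t\mid y_1,\ldots,y_{t-1}\sim{\rm Gamma}(q\alpha_{t-1},q\beta_{t-1})$, which is claim i.

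For part ii, I would exploit the observation equation $y_t\mid\theta_t\sim{\rm Pois}(\lambda_t\theta_t)$ and mix it against the predictive distribution of $\theta_t$ from part i. The cleanest route is the moment generating function: conditioning first on $\theta_t$ and then integrating against the ${\rm Gamma}(q\alpha_{t-1},q\beta_{t-1})$ density yields
\[
\E{e^{z y_t}\mid y_1,\ldots,y_{t-1}}=\left(\frac{q\beta_{t-1}}{q\beta_{t-1}-\lambda_t(e^z-1)}\right)^{q\alpha_{t-1}},
\]
valid for $z$ in a neighborhood of $0$. Matching this expression with the negative binomial MGF recalled in Section 2.1, with parameters $\lambda=\lambda_t\alpha_{t-1}/\beta_{t-1}$ and $\Gamma=q\alpha_{t-1}$, gives claim ii.

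For part iii, I would apply Bayes' rule using the predictive distribution from part i as the prior and the Poisson likelihood as the update, yielding
\[
p(\theta_t\mid y_1,\ldots,y_t)\propto(\lambda_t\theta_t)^{y_t}e^{-\lambda_t\theta_t}\cdot\theta_t^{q\alpha_{t-1}-1}e^{-q\beta_{t-1}\theta_t}\propto\theta_t^{q\alpha_{t-1}+y_t-1}e^{-(q\beta_{t-1}+\lambda_t)\theta_t},
\]
which is the ${\rm Gamma}(q\alpha_{t-1}+y_t,\,q\beta_{t-1}+\lambda_t)$ density. Reading off the shape and rate parameters produces the stated recursion for $\alpha_t,\beta_t$ and, as a bonus, confirms that the gamma form assumed in \eqref{eq.3400} is preserved from $t-1$ to $t$, so the scheme is internally consistent and can be iterated. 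The only subtlety worth highlighting is the careful use of conditional independence of $B_t$ and $\theta_{t-1}$ given $y_1,\ldots,y_{t-1}$ in the application of Lemma 1; the rest is a routine conjugacy calculation with no genuine obstacle.
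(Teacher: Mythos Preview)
Your proposal is correct and follows essentially the same approach as the paper: the paper does not give a self-contained proof of this lemma (it attributes the result to \cite{harvey1989time}), but its proof of the bivariate analogue (Theorem~\ref{thm.1}) proceeds exactly as you do---Lemma~\ref{lem.1} for part~i, the Poisson--gamma mixture for part~ii, and the Poisson--gamma conjugacy via Bayes' rule for part~iii. Your MGF verification for part~ii is slightly more explicit than the paper's one-line appeal to the mixture representation, but the substance is identical.
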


We note that the closed form expression of the likelihood function can be obtained from  the one-step-ahead forecasting density for the observations in Lemma \ref{lem.a1}.
An immediate consequence of Lemma \ref{lem.a1} is the following relationships concerning the conditional mean and variance
      \begin{equation}\label{up.1}
      \E{\theta_{t} \,|\, y_1, \cdots, y_{t-1}}=
      \E{\theta_{t-1} \,|\,  y_1, \cdots, y_{t-1}}.
      \end{equation}
and
      \begin{equation}\label{up.2}
      \Var{\theta_{t} \,|\, y_1, \cdots, y_{t-1}}=\frac{1}{q}
      \Var{\theta_{t-1} \,|\, y_1, \cdots, y_{t-1}}.
      \end{equation}
which intuitively explains the dynamics of the random effects in Model \ref{mod.0}: it preserves the mean while inflating the variance in one-step-ahead forecasting of the random effects. Finally, we provide the one-step-ahead forecasting for Model \ref{mod.0}.

\begin{lemma}\label{cor.a1}
    Under Model \ref{mod.0},  the one-step-ahead forecasting of the frequency at time $\tau$ is given by
      \begin{equation}\label{eq.4}
      \begin{aligned}
    \E{y_\tau \,|\, y_1, \cdots, y_{\tau-1} }&=
    \lambda_\tau
    \frac{\sum\limits_{t=1}^{\tau-1} q^{\tau-1-t}y_t +
    q^{\tau-1} \alpha_0}
    {\sum\limits_{t=1}^{\tau-1}q^{\tau-1-t}\lambda_{t} +
    q^{\tau-1} \beta_0}\\
&=\lambda_\tau
\left[
b_0+\sum\limits_{t=1}^{\tau-1}b_t\frac{y_t}{\lambda_t}
\right]
    \end{aligned}
\end{equation}
where
\[
b_0:=\frac{q^{\tau-1}\alpha_0}{\sum\limits_{t=1}^{\tau-1}q^{\tau-1-t}\lambda_{t} +
    q^{\tau-1} \beta_0}
\quad\hbox{and}\quad
b_t:=\frac{q^{\tau-1-t}\lambda_{t}}{\sum\limits_{t=1}^{\tau-1}q^{\tau-1-t}\lambda_{t} +
    q^{\tau-1} \beta_0}, \quad t=1, \cdots, \tau-1.
\]

\end{lemma}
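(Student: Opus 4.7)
The plan is to combine the closed-form expressions in Lemma \ref{lem.a1} with a straightforward unrolling of the filtering recursion for $(\alpha_t,\beta_t)$.

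First, I would invoke Lemma \ref{lem.a1}(ii), which states that $y_\tau\,|\,y_1,\cdots,y_{\tau-1}\sim {\rm NB}\bigl(\lambda_\tau\alpha_{\tau-1}/\beta_{\tau-1},\,q\alpha_{\tau-1}\bigr)$. Using the mean formula for the negative binomial recalled in Section \ref{sec.2.1}, this immediately gives
\[
\E{y_\tau \,|\, y_1, \cdots, y_{\tau-1}}=\lambda_\tau\,\frac{\alpha_{\tau-1}}{\beta_{\tau-1}}.
\]
So the task reduces to expressing $\alpha_{\tau-1}$ and $\beta_{\tau-1}$ in terms of the initial hyperparameters $(\alpha_0,\beta_0)$ and the observed data $y_1,\dots,y_{\tau-1}$.

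Next, I would unroll the filtering recursion from Lemma \ref{lem.a1}(iii), namely $\alpha_t=q\alpha_{t-1}+y_t$ and $\beta_t=q\beta_{t-1}+\lambda_t$, by induction on $t$. The induction (or a one-line telescoping argument) yields
\[
\alpha_{\tau-1}=q^{\tau-1}\alpha_0+\sum_{t=1}^{\tau-1}q^{\tau-1-t}y_t,\qquad
\beta_{\tau-1}=q^{\tau-1}\beta_0+\sum_{t=1}^{\tau-1}q^{\tau-1-t}\lambda_t.
\]
Substituting these two expressions into $\lambda_\tau\alpha_{\tau-1}/\beta_{\tau-1}$ delivers the first displayed equality of \eqref{eq.4}. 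The second equality is purely algebraic: factor $\lambda_\tau$ out, and split the numerator according to the index $t=0$ vs.\ $t\geq 1$, identifying the weights $b_0$ and $b_t$ as defined in the statement; note that by construction $b_0+\sum_{t=1}^{\tau-1}b_t=1$, so the bracketed term is a convex combination of the prior mean $\alpha_0/\beta_0$ (implicit through $b_0$) and the past ``empirical frequencies'' $y_t/\lambda_t$.

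There is essentially no obstacle here; the only thing to be careful about is that $\lambda_t$ is deterministic (individual characteristics are treated as covariates, not random), so it can be pulled out of the conditional expectation and the recursion for $\beta_t$ is indeed non-stochastic. Once that is noted, the proof is two lines of telescoping plus substitution.
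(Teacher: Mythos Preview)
Your proposal is correct and follows exactly the approach the paper indicates (see the proof of Theorem \ref{thm.2}, part $i$, which just says to unroll the recursions in \eqref{eq.45} for $\alpha_t^{[1]},\beta_t^{[1]}$; Lemma \ref{cor.a1} itself is stated without proof). One minor slip: your side remark that $b_0+\sum_{t=1}^{\tau-1}b_t=1$ is false in general---the numerator of $b_0$ involves $\alpha_0$ while the denominator involves $\beta_0$, so the weights sum to $1$ only when $\alpha_0=\beta_0$---but this does not affect the proof, which is purely algebraic at that stage.
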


\section{Introduction of a generalized SM model for continuous variables}
The key idea of the above HF model  is in Lemma \ref{lem.1} and the Poisson-gamma conjugacy, which allow us to stay within the gamma family when we alternate between the filtering and predictive distribution of the random effect process. The SM model has a similar idea, but relies instead on the inverse gamma-gamma conjugacy.
In this section, we extend the SM model in \cite{smith1986non} where $q_t$ and $q_t^*$ are assumed to be fixed constants.
The following model is a generalized version of the SM model in \cite{smith1986non}.
We note that, while we allow  $q_t$ and $q_t^*$ to be varying, the SM model in \cite{smith1986non} assume $q_t$ and $q_t^*$ to be fixed constants.

\begin{model}\label{mod-1}
	Consider the stochastic process
	\[
	\left(y_t, \theta_t \right)_{t\in\mathbb{N}}
	\]
	where the observations $y_t$ are positive, real and state variables (or random effects) $\theta_t$ are positive, with the following joint dynamics
	\begin{enumerate}
		\item[i. ] (Initial condition for the state variable) At $t=0$, the random effect is inverse gamma distributed:
		\[
		\theta_0\sim {\rm IG}\left(\alpha_0, \beta_0\right)
		\]
		with parameters $\alpha_0>1, \beta_0>0$.
		
		\item[ii.] (Transition equation at time $t$): Assume that at a certain time $t$,  the filtering distribution $ \theta_{t-1}|y_1, \cdots, y_{t-1} $ is gamma 
		\begin{equation}\label{eq.3401}
			\theta_{t-1}|y_1, \cdots, y_{t-1} \sim {\rm IG}\left( \alpha_{t-1}, \beta_{t-1}\right)
		\end{equation}
		with parameters $\alpha_{t-1}>1$, $\beta_{t-1}>0$, and let
		$B_t$ be conditionally independent with $\theta_{t-1}$ given $y_1, \cdots, y_{t-1}$, with beta marginal:
		\[
		B_t|y_1, \cdots, y_{t-1}\sim {\rm Beta}\left(  q_t \alpha_{t-1}, (1-q_t) \alpha_{t-1}\right)
		\]
		{where $q_t\in(0,1)$ is a function of $\alpha_{t-1}$.}
	
		Then, the new state variable $\theta_t$ is defined by
		\begin{equation}\label{eq.2}
			\theta_t:= \frac{\theta_{t-1}q_t^*}{B_t},
		\end{equation}
		{where $q_t^*>0$ is another function of $\alpha_{t-1}$.}

		\item[iii.] (Observed variable at time $t$) For given positive constants $\lambda$ and $\psi$ (which might depend on individual characteristics), the observation at time $t$, $y_t$, is drawn from the following conditional distribution:
		\begin{equation*}
			y_t | y_1, \cdots, y_{t-1}, \theta_1, \cdots, \theta_{t} \sim {\rm Gamma}\left( \frac{1}{\psi}, \frac{1}{\theta_t\lambda_t\psi}\right).
		\end{equation*}
		
	\end{enumerate}
\end{model}

Similar as in Model \ref{mod.0}, the transition equations \eqref{eq.3401} at time $t$  requires the filtering distribution from time $t-1$ to be inverse gamma.  Lemma 5 below is the analog of Lemma 2.  It confirms that this is indeed the case due to the inverse gamma-gamma conjugacy. We also obtain, as a by-product, the predictive distribution of the random effect, which is also inverse gamma.

\begin{lemma} [Recursion for forecasting and filtering]\label{lem.a2}
	Under Model \ref{mod-1}
	we have
	\begin{enumerate}[$i.$]
		\item One-step-ahead forecasting of the random effects at each time $t$ is given by
		\[
		\theta_{t}|y_1, \cdots, y_{t-1} \sim {\rm IG}\left(  q_t\alpha_{t-1}, q_t^*\beta_{t-1}\right).
		\]
		\item The one-step-ahead forecasting density for the observations is given by
		\[
		y_t\,|\, y_1, \cdots, y_{t-1} \sim {\rm GB2}\left(
		1\,,\,
		{q_t^*}\beta_{t-1}\lambda_{t}\psi \,,\,
		\frac{1}{\psi}\,,\, 
		{q_t\alpha_{t-1}}
		\right)
		\]
		\item The filtering distribution at time $t$ is given as
		\[
		\theta_{t}|y_1, \cdots, y_t \sim {\rm IG}\left( \alpha_{t}, \beta_{t}\right)
		\]
		with
		\begin{equation}
			\label{parameterupdating}
		\begin{cases}
			\alpha_{t}:={q_t\alpha_{t-1}}+\frac{1}{\psi};\\
			\beta_{t}:={q_t^*\beta_{t-1}}+\frac{y_t}{\lambda_t\psi}.
		\end{cases}
		\end{equation}
	\end{enumerate}
\end{lemma}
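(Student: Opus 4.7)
The plan is to mirror the three-step proof of Lemma \ref{lem.a1} for the HF model, replacing the Poisson--gamma conjugacy by the gamma--inverse-gamma one, and replacing the multiplicative Lukacs construction $\theta_{t-1}B_t$ by the reciprocal variant $\theta_{t-1}/B_t$ that appears in the transition equation \eqref{eq.2}. Throughout I would condition on $y_1,\ldots,y_{t-1}$ and use the hypothesised conditional independence of $B_t$ and $\theta_{t-1}$.

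For part $(i)$, observe that $\theta_{t-1}\mid y_1,\ldots,y_{t-1} \sim {\rm IG}(\alpha_{t-1},\beta_{t-1})$ is equivalent to $1/\theta_{t-1}\mid y_1,\ldots,y_{t-1} \sim {\rm Gamma}(\alpha_{t-1},\beta_{t-1})$. Applying Lemma \ref{lem.1} to the independent pair $(1/\theta_{t-1},B_t)$ with $B_t\sim{\rm Beta}(q_t\alpha_{t-1},(1-q_t)\alpha_{t-1})$ gives $B_t/\theta_{t-1}\sim{\rm Gamma}(q_t\alpha_{t-1},\beta_{t-1})$, and taking reciprocals yields $\theta_{t-1}/B_t\sim{\rm IG}(q_t\alpha_{t-1},\beta_{t-1})$. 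Multiplying by the positive constant $q_t^*$ scales the rate parameter of the inverse gamma by $q_t^*$ and therefore gives $\theta_t\sim{\rm IG}(q_t\alpha_{t-1},q_t^*\beta_{t-1})$, as claimed. Parts $(ii)$ and $(iii)$ then follow from a single joint-density calculation: multiplying the gamma observation kernel by the inverse gamma predictive density from $(i)$, the joint conditional density of $(y_t,\theta_t)$ factors, up to constants not involving $\theta_t$, as $y_t^{1/\psi-1}$ times
\[
\theta_t^{-(q_t\alpha_{t-1}+1/\psi)-1}\exp\!\left(-\frac{q_t^*\beta_{t-1}+y_t/(\lambda_t\psi)}{\theta_t}\right).
\]

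Recognising this $\theta_t$-kernel as an inverse gamma density yields $(iii)$ at once with the updating rule \eqref{parameterupdating}. For $(ii)$, integrating $\theta_t$ out via $\int_0^\infty \theta^{-\nu-1}e^{-A/\theta}\,d\theta=\Gamma(\nu)/A^\nu$ produces a marginal density proportional to $y_t^{1/\psi-1}\left[q_t^*\beta_{t-1}+y_t/(\lambda_t\psi)\right]^{-(1/\psi+q_t\alpha_{t-1})}$. Setting $b:=q_t^*\beta_{t-1}\lambda_t\psi$ and extracting the overall powers of $\lambda_t\psi$ converts this into
\[
\frac{y_t^{1/\psi-1}}{b^{1/\psi}\,B(1/\psi,q_t\alpha_{t-1})\,(1+y_t/b)^{1/\psi+q_t\alpha_{t-1}}},
\]
which is precisely the ${\rm GB2}(1,b,1/\psi,q_t\alpha_{t-1})$ density stated in $(ii)$. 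The main obstacle, beyond routine conjugacy bookkeeping, is this last algebraic step: one must carefully collect the factors $(q_t^*\beta_{t-1})^{q_t\alpha_{t-1}}$, $(1/(\lambda_t\psi))^{1/\psi}$, and the $\Gamma$-function normalising constants so that they reproduce exactly $b^{1/\psi}$ and $B(1/\psi,q_t\alpha_{t-1})$ in the GB2 denominator and leave no stray factors of $\psi$, $\lambda_t$, or $q_t^*$ behind.
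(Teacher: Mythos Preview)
Your proposal is correct and follows the same route the paper takes (the paper states Lemma \ref{lem.a2} without a self-contained proof, but its proof of Theorem \ref{thm.1} for the bivariate model spells out exactly these steps: part $i$ via Lemma \ref{lem.1} applied to $1/\theta_{t-1}$, part $ii$ via the gamma--inverse-gamma mixture yielding GB2, and part $iii$ via the conjugacy/Bayes computation). Your explicit joint-density calculation and the identification of the GB2 normalising constants are a faithful and slightly more detailed elaboration of that argument.
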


We finally note that the closed form expression of the likelihood function can be obtained from  the one-step-ahead forecasting density for the observations in Lemma \ref{lem.a2}.

\subsection{Specification of $q_t$, $q_t^*$}
While the SM model in \cite{smith1986non} assume constant $q_t$ and $q_t^*$, for insurance applications,
 some other specifications might lead to more desirable properties. Indeed, first, inspired by the above HF model, we would like the process $(\theta_t)$ to have the same kind of martingale dynamics so that its mean does not explode when time increases. Second, because the one-step-ahead predictive, GB2 distribution only has finite mean and variance when $q_{t}\alpha_{t-1}$ is larger than 2, we would like this property to hold at any time $t$.

To this end, we first rewrite Lemma 1 in an equivalent, inverse gamma version. That is, given $q, \tilde{q}_1  \in(0,1)$,  $\alpha>1$, and $\tilde{q}_2, \beta>0$, random variables $\theta\sim {\rm IG}(\alpha, \beta)$ and $B\sim {\rm Beta}(\tilde{q}_1\alpha, (1-\tilde{q}_1)\alpha)$ are independent, then the rescaled product
\begin{equation*}
\theta^*:=\frac{\theta \tilde{q}_2}{B},
\end{equation*}
is ${\rm IG}(\tilde{q}_1 \alpha, \tilde{q}_2\beta)$ distributed with mean and variance
\begin{equation}
\E{\theta^*}=\frac{\tilde{q}_2 \beta}{\tilde{q}_1 \alpha-1}, \qquad
Var(\theta^*)=\frac{\tilde{q}_2^2\beta^2}{(\tilde{q}_1\alpha-1)^2(\tilde{q}_1\alpha-2)},
\end{equation}
 provided that $\tilde{q}_1\alpha>2.$


The following Lemma is a direct application of Lemma 1, by using slightly different scaling parameters in order to have a martingale with inverse gamma marginal. 

\begin{lemma}\label{lem.5}
For $q, \tilde{q}_1  \in(0,1)$,  $\alpha>1$, and $\tilde{q}_2, \beta>0$, assume that random variables, $\theta\sim {\rm IG}(\alpha, \beta)$ and $B\sim {\rm Beta}(\tilde{q}_1\alpha, (1-\tilde{q}_1)\alpha)$ are independent, and define the inverse gamma variable $\theta^*$ through:
\begin{equation*}
\theta^*:=\frac{\theta \tilde{q}_2}{B}.
\end{equation*}
 Then, $\E{\theta}$ is finite and equal to
\begin{equation}
	\label{equalmean}
	\E{\theta^*}=\E{\theta}=\frac{\beta}{\alpha-1}.
\end{equation}
if and only if \begin{equation}
	\label{relationq1q2}
	\tilde{q}_2 (\alpha-1)=\tilde{q}_1\alpha-1>0.
	\end{equation}
Furthermore, if $\alpha>2$ so that $Var(\theta)$ is finite, then $\theta^*$ satisfies simultaneously equation \eqref{equalmean} and
\begin{equation}
	\label{largervariance}
	\Var{\theta^*}=\frac{1}{q}\Var{\theta}< \infty, \quad q\in(0,1)
\end{equation}
if and only if
   \begin{equation}
   	\label{onlyq1q2}
      \tilde{q}_1=\frac{q(\alpha-2)+2}{\alpha}  \quad\hbox{and}\quad       \tilde{q}_2=\frac{q(\alpha-2)+1}{\alpha-1}.
      \end{equation}

\end{lemma}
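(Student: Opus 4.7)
My plan is to first identify the distribution of $\theta^*$ explicitly using Lemma \ref{lem.1}, and then reduce both claims to algebraic equations in $\tilde{q}_1$ and $\tilde{q}_2$ obtained by matching moments. Since $\theta^*=\theta\tilde{q}_2/B = 1/\bigl((B/\tilde{q}_2)\cdot(1/\theta)\bigr)$ and $1/\theta\sim{\rm Gamma}(\alpha,\beta)$, Lemma \ref{lem.1} applied to the gamma variable $1/\theta$ with beta factor $B$ (after absorbing the scaling $\tilde{q}_2$ into the rate parameter) yields
\[
\theta^*\sim {\rm IG}\bigl(\tilde{q}_1\alpha,\,\tilde{q}_2\beta\bigr),
\]
as already anticipated in the paragraph preceding the lemma. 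This is the only probabilistic input; everything afterwards is deterministic algebra on the parameters.

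For the first equivalence, I would invoke the formula $\E{{\rm IG}(a,b)}=b/(a-1)$ valid whenever $a>1$. The mean of $\theta^*$ exists and equals the mean of $\theta$ precisely when $\tilde{q}_1\alpha>1$ and
\[
\frac{\tilde{q}_2\beta}{\tilde{q}_1\alpha-1}=\frac{\beta}{\alpha-1},
\]
which is exactly the stated identity $\tilde{q}_2(\alpha-1)=\tilde{q}_1\alpha-1>0$. The positivity simultaneously encodes existence of the mean and the equality of the means, which gives the "if and only if" in one line.

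For the second equivalence, I would assume the mean-matching equation \eqref{relationq1q2} already holds (both directions of the iff require it, since \eqref{equalmean} is part of the hypothesis on $\theta^*$) and then impose the variance-matching equation. Using $\Var{{\rm IG}(a,b)}=b^2/[(a-1)^2(a-2)]$ when $a>2$, I need
\[
\frac{\tilde{q}_2^{\,2}\beta^2}{(\tilde{q}_1\alpha-1)^2(\tilde{q}_1\alpha-2)}=\frac{1}{q}\cdot\frac{\beta^2}{(\alpha-1)^2(\alpha-2)}.
\]
Substituting the mean relation $(\tilde{q}_1\alpha-1)^2=\tilde{q}_2^{\,2}(\alpha-1)^2$ into the denominator on the left makes the $\tilde{q}_2^{\,2}$ and $(\alpha-1)^2$ factors cancel, leaving the single scalar equation $\tilde{q}_1\alpha-2=q(\alpha-2)$. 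Solving gives $\tilde{q}_1=[q(\alpha-2)+2]/\alpha$, and back-substituting into \eqref{relationq1q2} gives $\tilde{q}_2=[q(\alpha-2)+1]/(\alpha-1)$, which is \eqref{onlyq1q2}. I would close by remarking that $q\in(0,1)$ together with $\alpha>2$ ensure $\tilde{q}_1\alpha>2$ and $\tilde{q}_1,\tilde{q}_2\in(0,1)$, so the candidate parameters are admissible.

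No step looks genuinely difficult; the only point requiring a little care is the direction $(\Leftarrow)$ in the second part: one has to verify that the proposed $\tilde{q}_1,\tilde{q}_2$ indeed yield $\tilde{q}_1\alpha>2$ so that the variance of $\theta^*$ exists and then the computation above runs forward. This is immediate from $q>0$ and $\alpha>2$, so the argument is truly an exercise in matching the two IG moment formulas.
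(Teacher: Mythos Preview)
Your proposal is correct and follows essentially the same approach as the paper, which simply states that ``it is straightforward to check that the only solution to the above equations \eqref{equalmean} and \eqref{largervariance} is given by equation \eqref{onlyq1q2}.'' Your write-up in fact supplies the details the paper omits: identifying $\theta^*\sim{\rm IG}(\tilde{q}_1\alpha,\tilde{q}_2\beta)$ via Lemma~\ref{lem.1}, matching first moments to obtain \eqref{relationq1q2}, and then using that relation to collapse the variance equation to the single linear condition $\tilde{q}_1\alpha-2=q(\alpha-2)$.
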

Note, that $\tilde{q}_1, \tilde{q}_2$ defined in \eqref{onlyq1q2} satisfy:
      \begin{equation}\label{eq.b1}
	\frac{2}{\alpha}<\tilde{q}_1<1\quad\hbox{and}\quad 0<\tilde{q}_2<1.
\end{equation}


\begin{proof}


It is straightforward to check that the only solution to the above equations \eqref{equalmean} and \eqref{largervariance} is given by equation \eqref{onlyq1q2}.

%
%
%
%
%
%
\end{proof}

This lemma suggests the following specification of $q_t$ and $q_{t}^*$ in the SM model, which will be assumed throughout the rest of this section:
\begin{equation}
	\label{qt2}
	q_t:=\frac{q(\alpha_{t-1}-2)+2}{\alpha_{t-1}}\quad\hbox{and}\quad
	q_t^*:=\frac{q\left( \alpha_{t-1}-2\right)+1}{\alpha_{t-1}-1}, \quad q\in(0,1)
\end{equation}
Hence, under the model assumption in Model \ref{mod-1}, specification in \eqref{qt2} coupled with Lemma \ref{lem.a2} implies
\begin{equation}\label{eq.c1}
	\E{\theta_{t} \,|\, y_1, \cdots, y_{t-1}}=
	\E{\theta_{t-1} \,|\,  y_1, \cdots, y_{t-1}}
\end{equation}
and
\begin{equation*}
\E{y_t\,|\, y_1, \cdots, y_{t-1} }=
\lambda_t\frac{\beta_{t-1}}{\alpha_{t-1}-1}.
\end{equation*}
Moreover,  if $\alpha_{0}>2$, then using
$
q_t\alpha_{t-1}=q\left( \alpha_{t-1}-2\right)+2>2
$ from Lemma \ref{lem.5},
we have the following equality regarding the predictive variance of $\theta_t$:
      \begin{equation}\label{eq.c2}
	\Var{\theta_{t} \,|\, y_1, \cdots, y_{t-1}}=\frac{1}{q}
	\Var{\theta_{t-1} \,|\, y_1, \cdots, y_{t-1}}.
\end{equation}
which leads to:
$$
\Var{y_t\,|\, y_1, \cdots, y_{t-1}} =
\left(\frac{ \lambda_t \beta_{t-1} }
{\alpha_{t-1}-1}  \right)^2
\left[
\frac{\psi+1}{q\left(\alpha_{t-1}-2\right)}
+\psi
\right].
$$

Finally, the following lemma is an analog of Lemma 3.

\begin{lemma}\label{cor.a2}

Under Model \ref{mod-1}, we have
the one-step-ahead forecasting of the severity at time $\tau$ given by
    \begin{equation}\label{eq.3}
    \E{y_\tau\,|\, y_1, \cdots, y_{t-1}}=\lambda_{\tau}
    \left[
    b_0^* + \sum\limits_{t=1}^{\tau-1}b_t^* \frac{y_t}{\lambda_t}
    \right]
    \end{equation}
where
\begin{equation*}
    b_0^*:=\frac{\beta_0}{\alpha_{\tau-1}-1}b_0^{**}
\quad\hbox{and}\quad
    b_t^*:=
    \frac{1}{\left(\alpha_{\tau-1}-1\right) \psi}b_t^{**}
    \end{equation*}
    for $t=1, \cdots, \tau-1$, where
\begin{equation*}
b_t^{**}:=
\begin{cases}
  \prod\limits_{k=t+1}^{\tau}q_k^*, & t=0,1, \cdots, \tau-1;\\
  1, &t=\tau.
\end{cases}
\end{equation*}

\end{lemma}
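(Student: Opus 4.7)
The plan is to combine two pieces already in hand: (a) the closed form of the one-step-ahead predictive mean coming from Lemma \ref{lem.a2}(ii), and (b) the recursion for $\beta_{t}$ in Lemma \ref{lem.a2}(iii). The first yields the predictive mean at time $\tau$ in terms of $(\alpha_{\tau-1},\beta_{\tau-1})$, while the second lets me back-substitute $\beta_{\tau-1}$ into a linear functional of $\beta_0$ and of the past observations $y_1,\dots,y_{\tau-1}$.

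First, I would start from the predictive density in Lemma \ref{lem.a2}(ii), which says $y_\tau\mid y_1,\dots,y_{\tau-1}\sim{\rm GB2}(1,\,q_\tau^*\beta_{\tau-1}\lambda_\tau\psi,\,1/\psi,\,q_\tau\alpha_{\tau-1})$. Using the mean formula $bp/(q-1)$ recalled for the ${\rm GB2}(1,b,p,q)$ family in Section \ref{sec.2.1}, I get
\[
\E{y_\tau\mid y_1,\dots,y_{\tau-1}}
=\frac{q_\tau^*\beta_{\tau-1}\lambda_\tau}{q_\tau\alpha_{\tau-1}-1}.
\]
Then I would invoke the specification \eqref{qt2} of $(q_t,q_t^*)$, which (by Lemma \ref{lem.5} and condition \eqref{relationq1q2}) implies $q_\tau\alpha_{\tau-1}-1=q_\tau^*(\alpha_{\tau-1}-1)$. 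This collapses the expression into the clean form $\lambda_\tau\beta_{\tau-1}/(\alpha_{\tau-1}-1)$, matching \eqref{eq.c1}.

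Second, I would iterate the filtering update $\beta_t=q_t^*\beta_{t-1}+y_t/(\lambda_t\psi)$ from \eqref{parameterupdating} in Lemma \ref{lem.a2}(iii) backward to the prior:
\[
\beta_{\tau-1}
=\beta_0\prod_{k=1}^{\tau-1}q_k^{*}
+\sum_{t=1}^{\tau-1}\frac{y_t}{\lambda_t\psi}\prod_{k=t+1}^{\tau-1}q_k^{*},
\]
with the empty product equal to one. Dividing by $\alpha_{\tau-1}-1$ and multiplying by $\lambda_\tau$, I would read off the coefficients and match them termwise against $b_0^{*}$ and $b_t^{*}$ as defined in the statement via $b_t^{**}$; the coefficient in front of $\beta_0$ becomes $b_0^{*}$ and the coefficient of $y_t/\lambda_t$ becomes $b_t^{*}$, yielding \eqref{eq.3}.

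The only real obstacle is bookkeeping in the telescoped product of the $q_k^{*}$'s, i.e.\ keeping track of the empty-product convention at the endpoint $t=\tau-1$ so that the recursion unwinds cleanly into the piecewise definition of $b_t^{**}$. All remaining steps are direct substitutions from Lemma \ref{lem.a2} together with the algebraic identity $q_\tau\alpha_{\tau-1}-1=q_\tau^{*}(\alpha_{\tau-1}-1)$ that was singled out in Lemma \ref{lem.5}.
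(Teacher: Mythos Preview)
Your approach is correct and is precisely the one the paper uses: the paper does not write out a separate proof of Lemma~\ref{cor.a2}, but the argument you outline---compute $\E{y_\tau\mid\cdots}=\lambda_\tau\beta_{\tau-1}/(\alpha_{\tau-1}-1)$ from the GB2 mean and then unwind the recursion $\beta_t=q_t^*\beta_{t-1}+y_t/(\lambda_t\psi)$---is exactly what the paper carries out explicitly in the proof of the bivariate analogue, Theorem~\ref{thm.2} (see equations \eqref{eq.620} and \eqref{transformrecursion}). One bookkeeping remark: your telescoping of $\beta_{\tau-1}$ correctly produces $\prod_{k=t+1}^{\tau-1}q_k^*$, whereas the statement's $b_t^{**}$ is written with upper index $\tau$; this is the same index convention the paper adopts in \eqref{eq.61} and \eqref{transformrecursion}, so when you ``match termwise'' just be aware that the displayed $b_t^{**}$ carries an extra $q_\tau^*$ relative to the coefficients appearing in your expansion of $\beta_{\tau-1}$.
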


\begin{remark}
In Lemma \ref{lem.a2}, it is interesting to observe that the updating rule in \eqref{eq.2} of Model \ref{mod-1} with time-varying $q_t$ and $q_t^*$ guarantees $\alpha_t>1$ so that the following expectation
\[
\E{\theta_{t} \,|\, y_1, \cdots, y_{t-1}}<\infty
\]
is well defined. However, because $q_t$ and $q_t^*$ are time-varying, the weight $b_t^*$ in \eqref{eq.3} of  Model \ref{mod-1} is not exponentially decaying unlike the weight $q^{\tau-1-t}$ in \eqref{eq.4} of Model \ref{mod.0}.


\end{remark}

While it is natural to put more emphasize the recent claims than the old claims in the posterior ratemaking process, the model with the static random effect model as in \citet{lee2020poisson} cannot distinguish the seniority of the claims and put the same weights to all claims regardless of their seniority. On the other hand, Model \ref{mod.0} and Model \ref{mod-1} distinguish the seniority of the claims with the aid of dynamically updated state variables. In the following, we show that the one step-ahead forecasting can be represented as the linear combination of claims with timely ordered weights. Note that, for the pair comparison of the contribution of each claims, it is important to assume the same prior rates \citep{ahn2021order}.

\begin{corollary}\label{cor.a3}
    Under Model \ref{mod.0},
    if we assume
    \[
    \lambda_1=\cdots=\lambda_{\tau-1},
    \]
    then we have the following order:
    \[
    b_1<\cdots<b_{\tau-1}.
    \]
    Similarly, under Model \ref{mod-1},
    if we assume
    \[
    \lambda_1=\cdots=\lambda_{\tau-1},
    \]
    then we have the following order:
    \[
    b_1^*<\cdots<b_{\tau-1}^*.
    \]

\end{corollary}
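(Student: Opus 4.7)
The plan is to handle each model separately using the explicit formulas already established in Lemma \ref{cor.a1} and Lemma \ref{cor.a2}. In both cases the argument will reduce to showing that a sequence is strictly increasing in $t$ after pulling out constants that do not depend on $t$.

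For Model \ref{mod.0}, I would substitute the common value $\lambda_1=\cdots=\lambda_{\tau-1}=\lambda$ into the expression for $b_t$ displayed in Lemma \ref{cor.a1}. The denominator $\sum_{s=1}^{\tau-1} q^{\tau-1-s}\lambda_s + q^{\tau-1}\beta_0$ then depends only on $\lambda$, $q$, $\beta_0$ and $\tau$, and in particular is independent of $t$, so that $b_t$ is proportional to $q^{\tau-1-t}\lambda$. Since $q\in(0,1)$ and the exponent $\tau-1-t$ is strictly decreasing in $t$, the factor $q^{\tau-1-t}$ is strictly increasing in $t$, which yields $b_1<\cdots<b_{\tau-1}$.

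For Model \ref{mod-1}, I would use the representation $b_t^*=\frac{1}{(\alpha_{\tau-1}-1)\psi}\,b_t^{**}$ from Lemma \ref{cor.a2}, observing that the prefactor $\frac{1}{(\alpha_{\tau-1}-1)\psi}$ is the same for every $t$; here I would emphasize that $\alpha_{\tau-1}$ is pinned down by the updating recursion \eqref{parameterupdating} and does not depend on the weight index $t$. Hence monotonicity of $b_t^*$ in $t$ reduces to monotonicity of $b_t^{**}=\prod_{k=t+1}^{\tau}q_k^*$. Forming the telescoping ratio $b_t^{**}/b_{t+1}^{**}=q_{t+1}^*$ and invoking \eqref{eq.b1} (which gives $q_{t+1}^*\in(0,1)$ under the implicit assumption $\alpha_{t-1}>2$, propagated from $\alpha_0>2$ via \eqref{parameterupdating} and \eqref{qt2}), I conclude $b_t^{**}<b_{t+1}^{**}$ and therefore $b_1^*<\cdots<b_{\tau-1}^*$.

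There is no substantial obstacle: the heavy lifting is entirely in Lemmas \ref{cor.a1} and \ref{cor.a2} together with \eqref{eq.b1}. The only point deserving a moment's care is the SM case, where one must check that the prefactor $\frac{1}{(\alpha_{\tau-1}-1)\psi}$ is genuinely $t$-independent and that each $q_k^*$ lies strictly in $(0,1)$; once these are in hand, the telescoping ratio argument is immediate. I would also note, in passing, that the constant-$\lambda$ hypothesis is in fact not used for Model \ref{mod-1}, since the recursion \eqref{parameterupdating} makes $\alpha_{t-1}$ (and hence every $q_k^*$) evolve independently of the $\lambda_t$'s; the hypothesis is stated only for parallelism with the HF case.
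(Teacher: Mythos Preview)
Your proposal is correct and follows essentially the same line as the paper, which does not spell out a proof of Corollary~\ref{cor.a3} but uses the identical mechanism (showing each $q_k^{*}<1$ and exploiting the product form of $b_t^{**}$) when proving the analogous Theorem~3. Your extra remark that the constant-$\lambda$ hypothesis is superfluous for Model~\ref{mod-1} is a nice observation not made explicit in the paper, and your insistence on propagating $\alpha_t>2$ from $\alpha_0>2$ is in fact more careful than the paper's own argument in Theorem~3, where only $\alpha_0^{[2]}>1$ is invoked.
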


\subsection{Alternative specification of $q_t$, $q_t^*$}

The transition equation of $\theta_t$ in Model \ref{mod-1} allows the same conditional expectation and variance as those of Model \ref{mod.0} which leads to the intuitive updating rule of the random effect; see \eqref{up.1}, \eqref{up.2}, \eqref{eq.c1}, and \eqref{eq.c2}. However, unlike Model \ref{mod.0}, Lemma \ref{cor.a2} shows that this transition equation does not lead to the exponentially weighted moving average (EWMA). Let us now investigate whether alternative specifications for $q_t$ and $q_t^*$ would be possible, so that  EWMA formula can be obtained.

Let us now replace, in Model \ref{mod-1}, specification \eqref{qt2} by the following assumption:
\begin{equation}
	\label{alternativeqt}
	q_t:=\frac{q(\alpha_{t-1}-1)+1}{\alpha_{t-1}}\quad\hbox{and}\quad q_t^*:=q, \quad \hbox{for}\quad q\in(0,1).
\end{equation}
For $\alpha_0>1$, it is easy to show
\[
q_t\in(0,1) \quad\hbox{and}\quad q_t^*>0 \quad\hbox{for}\quad t=1,2,\cdots
\]
so that the specifications in \eqref{alternativeqt} are valid in Model \ref{mod-1}.
Furthermore, simple algebraic calculation shows that
\[
q_t\alpha_{t-1}>1, \quad t=1, 2, \cdots
\]
as long as $\alpha_0>1$.
As a result, Lemma \ref{lem.a2} deduces that the martingale condition \eqref{eq.c1} still holds. In other words, the predictive mean always exists and simple algebra leads to
\[
\E{y_\tau\,|\, y_1, \cdots, y_{t-1}}=\lambda_{\tau}\frac{q^{\tau-1}\beta_0 + q^{\tau-2}\frac{1}{\psi}\frac{y_1}{\lambda_1}+\cdots+\frac{1}{\psi}\frac{y_{\tau-1}}{\lambda_{\tau-1}}}
{q^{\tau-1}(\alpha_0-1)+\frac{1-q^{\tau-1}}{1-q}\frac{1}{\psi}}
\]
which is EWMA in $y_1,\cdots,y_{\tau-1}$.

However, from Lemma \ref{lem.5}, it is clear that
the variance formula in \eqref{eq.c2} no longer holds. We first investigate the existence of the conditional variance of $\theta_t \vert y_{t-1},\cdots,y_1$, which is equivalent to check whether we have $q_t \alpha_{t-1}>2$ for any time $t$.
Combining \eqref{parameterupdating} and \eqref{alternativeqt}, we get
%
\begin{equation}
	\label{qalpha}
q_{t}\alpha_{t-1}=q q_{t-1}\alpha_{t-2} +q \left(\frac{1}{\psi}-1\right)+1.
\end{equation}
Thus if $q_{t-1}\alpha_{t-2}>2$, then $q_{t}\alpha_{t-1}>q (\frac{1}{\psi}+1)+1$.

As a consequence, if we have:
\begin{equation}\label{eq.a1}
q_1\alpha_0 >2
\end{equation}
and
\[
	q \left(\frac{1}{\psi}+1\right)+1\geq 2
\]
then the predictive variance of $\theta_t$ always exists at any time $t$.\footnote{Note that the condition in \eqref{eq.a1} is equivalent with
\[
q(\alpha_0-1)>1.
\]}
Furthermore, in such case, we have
\[
\Var{\theta_\tau | y_1, \cdots, y_{\tau-1}}=
\frac{\alpha_{\tau-1}-2}{q(\alpha_{\tau-1}-1)-1}
\Var{\theta_{\tau-1} | y_1, \cdots, y_{\tau-1}}
\]
with
\[
\frac{\alpha_{\tau-1}-2}{q(\alpha_{\tau-1}-1)-1}>1.
\]
This alternative specification will be used later on to construct an alternative specification of our main bivariate frequency-severity model.

\section{The dynamic random effect based frequency-severity model}

In this section, we combine Model \ref{mod.0} and Model \ref{mod-1} to construct a  dynamic frequency-severity model. This model allows for dependence between the frequency and severity variables, by inspiring from \citet{garrido2016generalized}.

\subsection{The proposed model}
\begin{model}\label{mod.1}
For constants $q^{[1]}, q^{[2]}\in(0,1)$, consider the stochastic process
\[
\left(\boldsymbol{y}_t, \boldsymbol{\theta}_t \right)_{t\in\mathbb{N}}
\]
where the bivariate observations $\boldsymbol{y}_t:=(y_t^{[1]}, y_t^{[2]})$ are frequency-severity type and state variables (or random effects) $\boldsymbol{\theta}_t:=(\theta_t^{[1]}, \theta_t^{[2]})$ are bivariate positive, with the following joint dynamics
\begin{enumerate}
  \item[i. ] (Initial condition for the state variable) At the initial date, the two random effects are mutually independent, and gamma and inverse gamma distributed, respectively:
  \[
  \left(\theta^{[1]}_0, \theta^{[2]}_0 \right)\sim \prod\left(F_{\rm Gamma}\left(\cdot; \alpha^{[1]}_0, \beta^{[1]}_0\right), F_{\rm Gamma}\left(\cdot; \alpha^{[2]}_0, \beta^{[2]}_0\right)\right)
  \]
	where parameters $\alpha_0^{[1]}, \beta_0^{[1]}, \alpha_0^{[2]}, \beta_0^{[2]}$ are positive, with $\alpha_0^{[2]}>1$.

 \item[ii.] (Transition equation at time $t$): Assume that at a certain time $t$,  the filtering distribution $ \left(\theta_{t-1}^{[1]}, \theta_{t-1}^{[2]}\right)|\mathcal{F}_{t-1} $ is independent gamma and inverse gamma distribution
 \begin{equation}\label{eq.34}
 \left(\theta_{t-1}^{[1]}, \theta_{t-1}^{[2]}\right)|\mathcal{F}_{t-1} \sim \prod\left(F_{\rm Gamma}\left( \cdot ;  \alpha_{t-1}^{[1]}, \beta_{t-1}^{[1]}\right), F_{\rm IG}\left(\cdot;\alpha_{t-1}^{[2]}, \beta_{t-1}^{[2]}\right)\right)
 \end{equation}
with parameters $\alpha_{t-1}^{[1]}$, $\beta_{t-1}^{[2]}$, $\alpha_{t-1}^{[2]}$, and $\beta_{t-1}^{[2]}$ being the functions of $\mathcal{F}_{t-1}$, and $\alpha_{t-1}^{[2]}>1$, and let
$\left(B_t^{[1]}, B_t^{[2]}\right)$ be conditionally independent with $\left(\theta_{t-1}^{[1]}, \theta_{t-1}^{[2]} \right)$ given $\mathcal{F}_{t-1}$, with independent beta marginals:
 \[
 \left(B_t^{[1]}, B_t^{[2]}\right)|\mathcal{F}_{t-1}\sim \prod\left(F_{\rm Beta}\left( \cdot ; q^{[1]} \alpha_{t-1}^{[1]}, (1-q^{[1]}) \alpha_{t-1}^{[1]}\right), F_{\rm Beta}\left(\cdot; q^{[2]}_t \alpha_{t-1}^{[2]}, (1-q^{[2]}_t) \alpha_{t-1}^{[2]}\right)\right)
 \]
 for
 \begin{equation}\label{eq.42}
 q^{[2]}_t := \frac{q^{[2]}(\alpha_{t-1}^{[2]}-2)+2}{\alpha_{t-1}^{[2]}}.
 \end{equation}
 Then, the new state variables $(\theta_t^{[1]}, \theta_t^{[2]})$ are defined by
 \begin{equation}\label{eq.36}
(\theta_t^{[1]}, \theta_t^{[2]}):= \left(\theta_{t-1}^{[1]}\frac{B_t^{[1]}}{q^{[1]}}, \theta_{t-1}^{[2]}\frac{q^{*[2]}_t}{B_t^{[2]}} \right).
\end{equation}
where\footnote{The motivation for the definitions of $q_{t}^{[2]}$ and $q_{t}^{*[2]}$ are given in Corollary \ref{cor.1}.}
\begin{equation}\label{eq.43}
q^{*[2]}_t:=\frac{q^{[2]}\left( \alpha_{t-1}^{[2]}-2\right)+1}{\alpha_{t-1}^{[2]}-1}.
\end{equation}
\item[iii.] (Observed variable at time $t$) For some given positive constants
\[
\lambda_t^{[1]}, \quad \lambda_t^{*[2]}, \psi^{[2]}, \quad\hbox{and}\quad \eta
\]
which might depend on individual characteristics, the observation at time $t$ $\left( y_t^{[1]}, y_t^{[2]}\right)$ is drawn from the following conditional distribution:
    \begin{equation}\label{eq.38}
    y_t^{[1]} | \mathcal{F}_{t-1}, \mathcal{F}_{t}^{\rm [state]} \sim {\rm Pois}\left( \lambda_t^{[1]}\theta_t^{[1]}\right)
    \end{equation}
    and
    \begin{equation}\label{eq.39}
    y_t^{[2]} | \mathcal{F}_{t-1}, \mathcal{F}_{t}^{\rm [state]}, y_t^{[1]} \sim {\rm Gamma}\left( \frac{y_t^{[1]}}{\psi^{[2]}}, \frac{1}{\theta_t^{[2]}\lambda_t^{[2]}\psi^{[2]}}\right)
    \end{equation}
where
\[\lambda_t^{[2]}:=\lambda_t^{*[2]}\exp\left( \eta\, y_t^{[1]}\right).\]
\end{enumerate}
\end{model}

 Thus, starting from the initial condition, the transition equations and observation equations are applied in sequential order as in Figure \ref{chainstructure0}.
\begin{figure}[h!]
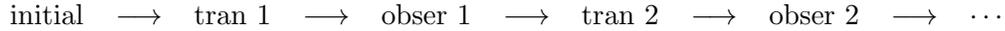

\centering
\begin{tabular}{ccccccccccc}
initial & $\longrightarrow$  & tran $1$ & $\longrightarrow$ & obser $1$ & $\longrightarrow$ & tran $2$ & $\longrightarrow$ & obser $2$ & $\longrightarrow$ & $\cdots$
\end{tabular}
\caption{Chain of transitions and observations equations: initial, tran $t$, obser $t$ refer to the initial condition (part $i$), transition process at time $t$ (part $ii$), and observation process at time $t$ (part $iii$), respectively.}\label{chainstructure0}
\end{figure}

We can also remark that, in the above model, if we focus only on the joint dynamics of $ y_t^{[1]}$ and $\theta_t^{[1]}$, then we get the standard HF model for time series of counts. In particular, the transition equation from $\theta_{t-1}^{[1]}$ to $\theta_t^{[1]}$ does not depend on the severity component $y_{t-1}^{[1]}$. In other words, their joint dynamics is \textit{exogenous} vis-\`a-vis the severity process and its random effect process. The dynamics of this latter, however, depends on $\theta_{t-1}^{[1]}$ through the conditional gamma distribution \eqref{eq.39}, whose shape and scape parameters depend both on $y_t^{[1]}$. The fact that its shape parameter is a multiple of $y_t^{[1]}$ means roughly that the $y_t^{[1]}$ individual claim amounts are i.i.d. gamma distributed, with the dispersion parameter $\phi^{[2]}$. This assumption is standard in the insurance literature (\cite{gourieroux1999econometrics}). The specification of the shape parameter in equation  \eqref{eq.39} can be viewed as a (dynamic random effect based) extension of the model of \cite{garrido2016generalized}, in the sense that the claim count enters into the regression model of (individual) claim amounts. 
Furthermore, the shape parameter depends also on $\theta^{[2]}_t$, which is defined through $\theta^{[2]}_{t-1}$, and the latter depends in turn on $y_{t-1}^{[1]}$ and so on. As a consequence,  the conditional distribution of $y_t^{[2]}$ given $y_t^{[1]}$ and $\mathcal{F}_{t-1}$ depends on all the past frequency and severity variables.  



The following theorem, which is an analog of Lemma 2  for HF model and Lemma 5 for SM model, ensures that the gamma distributional assumptions in \eqref{eq.34} are indeed satisfied:

 \begin{theorem} [Recursion for forecasting and filtering]\label{thm.1}
We have
 \begin{enumerate}[$i.$]
 \item One-step-ahead forecasting of the random effects at each time $t$ is given by
 \begin{equation}\label{eq.41}
 \left(\theta_{t}^{[1]}, \theta_{t}^{[2]}\right)|\mathcal{F}_{t-1} \sim \prod\left(F_{\rm Gamma}\left( \cdot;  q^{[1]}\alpha_{t-1}^{[1]}, q^{[1]}\beta_{t-1}^{[1]}\right), F_{\rm IG}\left( \cdot; q_t^{[2]}\alpha_{t-1}^{[2]}, q_t^{*[2]}\beta_{t-1}^{[2]}\right)\right).
 \end{equation}
 where $q_t^{[2]}$ and $q_t^{*[2]}$ is given in \eqref{eq.42} and \eqref{eq.43}.
    \item The one-step-ahead forecasting density for the observations is given by
\begin{equation}\label{eq.51}
y_t^{[1]}\,|\, \mathcal{F}_{t-1} \sim {\rm NB}\left( \lambda_t^{[1]}\frac{\alpha_{t-1}^{[1]}}{\beta_{t-1}^{[1]}}, q^{[1]}\alpha_{t-1}^{[1]}\right)
\end{equation}
and
\begin{equation}\label{eq.52}
y_t^{[2]}\,|\, \mathcal{F}_{t-1}, y_t^{[1]} \sim {\rm GB2}\left(
1\,,\,
{q_t^{*[2]}}\beta_{t-1}^{[2]}\lambda_{t}^{[2]}\psi^{[2]} \,,\,
\frac{y_t^{[1]}}{\psi^{[2]}}\,,\, 
{q_t^{[2]}  \alpha_{t-1}^{[2]}}
\right)
\end{equation}
with
\[
\E{y_t^{[2]}\,|\, \mathcal{F}_{t-1}, y_t^{[1]}}=y_t^{[1]}\lambda_t^{[2]}\frac{\beta_{t-1}^{[2]}}{\alpha_{t-1}^{[2]}-1}.
\]
\item The filtering distribution at time $t$ is given as
    \begin{equation}\label{eq.35}
 \left(\theta_{t}^{[1]}, \theta_{t}^{[2]}\right)|\mathcal{F}_t \sim \prod\left(F_{\rm Gamma}\left( \cdot; \alpha_{t}^{[1]}, \beta_{t}^{[1]}\right), F_{\rm IG}\left( \cdot; \alpha_{t}^{[2]}, \beta_{t}^{[2]}\right)\right)
 \end{equation}
 with
   \begin{equation}\label{eq.45}
   \begin{cases}
     \alpha_{t}^{[1]}:=q^{[1]}\alpha_{t-1}^{[1]}+y_t^{[1]}\\
     \beta_{t}^{[1]}:=q^{[1]}\beta_{t-1}^{[1]}+\lambda_t^{[1]}\\
     \alpha_{t}^{[2]}:= {q_t^{[2]}\alpha_{t-1}}
     +\frac{y_t^{[1]}}{\psi^{[2]}}\\
     \beta_{t}^{[2]}:={q_t^{*[2]}}
     \beta_{t-1}^{[2]}+\frac{y_t^{[2]}}{\lambda_t^{[2]}\psi^{[2]}}
   \end{cases}
   \end{equation}
   with $\alpha_{t}^{[1]}, \beta_{t}^{[1]}, \beta_{t}^{[2]}>0$ and $\alpha_{t}^{[2]}>1$.

\item
If $\alpha_{t-1}^{[2]}>2$, then since
$
q^{[2]}\left( \alpha_{t-1}^{[2]}-2\right)+2>2,
$
we have
$$
\Var{y_t^{[2]}\,|\, \mathcal{F}_{t-1}, y_t^{[1]}} =
y_t^{[1]}\left(\frac{ \lambda_t^{[2]} \beta_{t-1}^{[2]} }
{\alpha_{t-1}^{[2]}-1}  \right)^2
\left[
\frac{\psi^{[2]}+y_t^{[1]}}{q^{[2]}\left(\alpha_{t-1}^{[2]}-2\right)}
+\psi^{[2]}
\right].
$$

   \item If we further assume that $\alpha_{t-1}^{[2]}>2$, then we have $\alpha_{t}^{[2]}>2$. In other words, the shape parameter of the gamma distribution remains larger than 2 when $t$ increases, so long as its initial value is larger than 2.
 \end{enumerate}
 \end{theorem}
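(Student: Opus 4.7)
The plan is to prove Theorem \ref{thm.1} by induction on $t$, treating the frequency chain $(y_t^{[1]},\theta_t^{[1]})$ and the severity chain $(y_t^{[2]},\theta_t^{[2]})$ in parallel. The base case is $t=0$, furnished by part (i) of Model \ref{mod.1}, which supplies exactly the product gamma $\times$ inverse gamma filtering form \eqref{eq.34} with $\alpha_0^{[2]}>1$ (and $\alpha_0^{[2]}>2$ whenever parts iv and v are relevant). The induction hypothesis at time $t-1$ is \eqref{eq.34}; I then push it through one transition step (giving parts i and ii) and one observation step (giving part iii).

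For part (i): the transition \eqref{eq.36} decouples across components. On the frequency side, $\theta_t^{[1]}=\theta_{t-1}^{[1]}B_t^{[1]}/q^{[1]}$ with $B_t^{[1]}$ conditionally independent of $\theta_{t-1}^{[1]}$, so Lemma \ref{lem.1} (Lukacs) yields $\theta_t^{[1]}|\mathcal{F}_{t-1}\sim{\rm Gamma}(q^{[1]}\alpha_{t-1}^{[1]},q^{[1]}\beta_{t-1}^{[1]})$. On the severity side, the specifications \eqref{eq.42}--\eqref{eq.43} match the parametrization \eqref{onlyq1q2} of Lemma \ref{lem.5}, so that $\theta_t^{[2]}|\mathcal{F}_{t-1}\sim{\rm IG}(q_t^{[2]}\alpha_{t-1}^{[2]},q_t^{*[2]}\beta_{t-1}^{[2]})$. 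Conditional independence of $B_t^{[1]},B_t^{[2]}$ (from Model \ref{mod.1}) together with the induction hypothesis propagates to the joint predictive, which establishes \eqref{eq.41}.

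For part (ii), the negative binomial marginal for $y_t^{[1]}|\mathcal{F}_{t-1}$ follows from integrating out $\theta_t^{[1]}$ against the Poisson likelihood \eqref{eq.38}, exactly the Poisson--gamma mixing computation already used in Lemma \ref{lem.a1}. For $y_t^{[2]}|\mathcal{F}_{t-1},y_t^{[1]}$, since $\theta_t^{[2]}$ depends on $\mathcal{F}_{t-1}$ but not on $y_t^{[1]}$ (the frequency chain is exogenous), I integrate out $\theta_t^{[2]}$ against the gamma likelihood \eqref{eq.39}; writing $1/\theta_t^{[2]}$ as a gamma variable turns this into a standard beta-of-the-second-kind mixture, producing the GB2 density \eqref{eq.52}, and its conditional mean then follows from the GB2 moment formula for $a=1$ recalled in Section 2.1. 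For part (iii), I apply Bayes' rule componentwise: Poisson--gamma conjugacy gives the first two lines of \eqref{eq.45}; for the second two, the likelihood factor in \eqref{eq.39} is proportional in $\theta_t^{[2]}$ to $(\theta_t^{[2]})^{-y_t^{[1]}/\psi^{[2]}}\exp(-y_t^{[2]}/(\theta_t^{[2]}\lambda_t^{[2]}\psi^{[2]}))$, which is precisely an IG kernel, so gamma--inverse gamma conjugacy closes the loop. Product independence of the posterior holds because the joint density factorizes. Checking $\alpha_t^{[1]},\beta_t^{[1]},\beta_t^{[2]}>0$ and $\alpha_t^{[2]}>1$ is a short non-negativity argument using $q^{[2]}(\alpha_{t-1}^{[2]}-2)+2>1$ whenever $\alpha_{t-1}^{[2]}>1$.

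For parts (iv) and (v), I strengthen the induction hypothesis to $\alpha_{t-1}^{[2]}>2$ and note from \eqref{eq.42} that $q_t^{[2]}\alpha_{t-1}^{[2]}=q^{[2]}(\alpha_{t-1}^{[2]}-2)+2>2$, which ensures the required GB2 moments exist. Part (iv) is then just substitution of $a=1,\ b=q_t^{*[2]}\beta_{t-1}^{[2]}\lambda_t^{[2]}\psi^{[2]},\ p=y_t^{[1]}/\psi^{[2]},\ q=q_t^{[2]}\alpha_{t-1}^{[2]}$ into the GB2 variance formula. Part (v) collapses to a one-line computation: combining \eqref{eq.45} with \eqref{eq.42} yields $\alpha_t^{[2]}=q^{[2]}(\alpha_{t-1}^{[2]}-2)+2+y_t^{[1]}/\psi^{[2]}>2$ since both $q^{[2]}(\alpha_{t-1}^{[2]}-2)$ and $y_t^{[1]}/\psi^{[2]}$ are non-negative, closing the induction and showing that the regularity $\alpha_t^{[2]}>2$ is preserved for all $t$. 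The main obstacle is bookkeeping rather than a deep calculation: keeping track of what is conditioned on at each point (particularly that the severity observation equation couples $\theta_t^{[2]}$ with $y_t^{[1]}$ through both the shape and the scale of the gamma), and verifying that this specific parametrization is what allows the inverse-gamma conjugacy to survive in the presence of cross-sectional frequency--severity dependence.
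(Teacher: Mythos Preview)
Your proposal is correct and follows essentially the same approach as the paper: Lemma~\ref{lem.1} and Lemma~\ref{lem.5} for the predictive step (part~i), Poisson--gamma and gamma--inverse-gamma mixing for the observation predictive (part~ii), conjugacy via an explicit density computation for the filtering step (part~iii), and the GB2 moment formulas plus the recursion for parts~iv and~v. Your write-up is, if anything, a little more explicit than the paper's about the inductive structure and about why conditioning on $y_t^{[1]}$ does not disturb the predictive law of $\theta_t^{[2]}$, but the mathematical content is identical.
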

 \begin{proof}


Part $i$ is a consequence of  \eqref{eq.34}, \eqref{eq.36} and Lemma \ref{lem.1} and Lemma \ref{lem.5}.
Part $ii$ is a direct consequence of part $i$, and the fact that the negative binomial distribution is the Poisson mixture with the gamma distribution, and  GB2 distributions are the gamma mixture with inverse gamma distribution. Part $iii$ is a consequence of part $ii$ and the variance formula for the GB2 distribution (see Section \ref{sec.2.1}).
Part $iv$ is a consequence of the Poisson-Gamma and Gamma-inverse Gamma conjugacy. More precisely,
 \[
 \begin{aligned}
 \pi\left(\theta_{t}^{[1]}, \theta_{t}^{[2]} \,|\, y_t^{[1]}, y_t^{[2]}, \mathcal{F}_{t-1}\right)
 &\propto
 f\left(y_t^{[1]}, y_t^{[2]} \,|\, \theta_{t}^{[1]}, \theta_{t}^{[2]}, \mathcal{F}_{t-1}\right)
 \pi\left(\theta_{t}^{[1]}, \theta_{t}^{[2]} \,|\, \mathcal{F}_{t-1} \right)\\
 &\propto
 \left( \theta_{t}^{[1]} \right)^{q^{[1]}\alpha_{t-1}^{[1]}+y_t^{[1]}}
 \exp\left( -\theta_{t}^{[1]}\left(  q^{[1]} \beta_{t-1}^{[1]}+\lambda_t^{[1]}\right)\right)\\ 
 &\quad\quad
 \left( 1/\theta_{t}^{[2]} \right)^{q_t^{[2]}\alpha_{t-1}^{[1]} + y_{t}^{[1]}/\psi^{[2]} -1}
  \exp\left( -\frac{1}{\theta_{t}^{[2]}}\left( q_t^{*[2]}\beta_{t-1}^{[2]}+\frac{y_t^{[2]}}{\lambda_t^{[2]}\psi^{[2]}}\right)\right)\\ 
 \end{aligned}
 \]
 which implies
 \[
 \begin{aligned}
 &\left(\theta_{t}^{[1]}, \theta_{t}^{[2]}\right)\,|\, y_t^{[1]}, y_t^{[2]}, \mathcal{F}_{t-1} \sim
 \prod\left({\rm Gamma}\left( \alpha_{t}^{[1]}, \beta_{t}^{[1]}\right), {\rm IG}\left(\alpha_{t}^{[2]}, \beta_{t}^{[2]}\right)\right)
 \end{aligned}
 \]
 where $\alpha_{t}^{[1]}, \beta_{t}^{[1]}, \alpha_{t}^{[2]}, \beta_{t}^{[2]}$ are given in \eqref{eq.45}. Hence, we have finished the proof of part iv.
 Finally, part $v$ is a direct consequence of part $iii$ and the definition of $\alpha_{t}^{[2]}$ given in equation \eqref{eq.45}.
 \end{proof}

We note that the closed form expression of the likelihood function can be obtained from  the one-step-ahead forecasting density for the observations in Theorem \ref{thm.1}.
The following result 
intuitively explains the dynamics of the random effects which preserve the mean while inflating the variance in one-step-ahead forecasting of the random effects. 

\begin{corollary}\label{cor.1}
    Under Model \ref{mod.1}, for a positive constants $\alpha_0^{[1]}, \beta_0^{[1]}, \alpha_0^{[2]}, \beta_0^{[2]}$ with $\alpha_0^{[2]}>1$, we have the following results.
    \begin{enumerate}
      \item[i.]  We have the following relation of conditional mean
      \[
      \E{\theta_{t}^{[1]} \,|\, \mathcal{F}_{t-1}}=
      \E{\theta_{t-1}^{[1]} \,|\, \mathcal{F}_{t-1}}
      \]
      and
      \[
      \E{\theta_{t}^{[2]}\,\bigg|\, \mathcal{F}_{t-1}}=
      \E{\theta_{t-1}^{[2]}\,\bigg|\, \mathcal{F}_{t-1}}.
      \]

      \item[ii.] We have the following relation of the conditional variance
      \[
      \Var{\theta_{t}^{[1]} \,|\, \mathcal{F}_{t-1}}=\frac{1}{q^{[1]}}
      \Var{\theta_{t-1}^{[1]} \,|\, \mathcal{F}_{t-1}}.
      \]
      Furthermore, if $\alpha_0^{[2]}>2$, we have
      \[
      \Var{\theta_{t}^{[2]}\,\bigg|\, \mathcal{F}_{t-1}}=\frac{1}{q^{[2]}}
      \Var{\theta_{t-1}^{[2]}\,\bigg|\, \mathcal{F}_{t-1}}.
      \]

    \end{enumerate}
\end{corollary}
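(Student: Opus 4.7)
The plan is to read this corollary as an immediate consequence of Theorem~\ref{thm.1}(i) coupled with the mean and variance formulas for the Gamma and inverse gamma distributions recalled in Section~\ref{sec.2.1}. Specifically, Theorem~\ref{thm.1}(i) gives the one-step-ahead forecasting laws
\[
\theta_t^{[1]} \mid \mathcal{F}_{t-1} \sim {\rm Gamma}\bigl(q^{[1]}\alpha_{t-1}^{[1]},\, q^{[1]}\beta_{t-1}^{[1]}\bigr),
\qquad
\theta_t^{[2]} \mid \mathcal{F}_{t-1} \sim {\rm IG}\bigl(q_t^{[2]}\alpha_{t-1}^{[2]},\, q_t^{*[2]}\beta_{t-1}^{[2]}\bigr),
\]
while the filtering laws at time $t-1$ are, from \eqref{eq.34}, ${\rm Gamma}(\alpha_{t-1}^{[1]}, \beta_{t-1}^{[1]})$ and ${\rm IG}(\alpha_{t-1}^{[2]}, \beta_{t-1}^{[2]})$. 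Thus both parts of the corollary reduce to comparing two pairs of moments for explicit parametric families.

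For the frequency component the computation is immediate. Writing out mean and variance of the Gamma predictive, the factor $q^{[1]}$ cancels in the ratio defining the mean,
\[
\E{\theta_t^{[1]} \mid \mathcal{F}_{t-1}} = \frac{q^{[1]}\alpha_{t-1}^{[1]}}{q^{[1]}\beta_{t-1}^{[1]}} = \frac{\alpha_{t-1}^{[1]}}{\beta_{t-1}^{[1]}} = \E{\theta_{t-1}^{[1]} \mid \mathcal{F}_{t-1}},
\]
whereas in the variance the denominator is squared and only one factor of $q^{[1]}$ cancels, producing the claimed inflation by $1/q^{[1]}$. No regularity condition beyond positivity of $\alpha_{t-1}^{[1]},\beta_{t-1}^{[1]}$ is needed here, and Theorem~\ref{thm.1}(iv) guarantees that this positivity is propagated recursively from the initial parameters.

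For the severity component the essence of the argument is already encoded in Lemma~\ref{lem.5}: the specification \eqref{eq.42}--\eqref{eq.43} of $q_t^{[2]}$ and $q_t^{*[2]}$ is exactly the one obtained there with $(q,\alpha,\beta)=(q^{[2]},\alpha_{t-1}^{[2]},\beta_{t-1}^{[2]})$, so I would simply apply Lemma~\ref{lem.5} conditionally on $\mathcal{F}_{t-1}$. Concretely, substituting \eqref{eq.42} gives $q_t^{[2]}\alpha_{t-1}^{[2]}-1 = q^{[2]}(\alpha_{t-1}^{[2]}-2)+1$, which by \eqref{eq.43} equals $q_t^{*[2]}(\alpha_{t-1}^{[2]}-1)$; therefore
\[
\E{\theta_t^{[2]} \mid \mathcal{F}_{t-1}} = \frac{q_t^{*[2]}\beta_{t-1}^{[2]}}{q_t^{[2]}\alpha_{t-1}^{[2]}-1} = \frac{\beta_{t-1}^{[2]}}{\alpha_{t-1}^{[2]}-1} = \E{\theta_{t-1}^{[2]} \mid \mathcal{F}_{t-1}}.
\]
For the variance identity the additional ingredient is finiteness: I need $q_t^{[2]}\alpha_{t-1}^{[2]}>2$, which simplifies via \eqref{eq.42} to $q^{[2]}(\alpha_{t-1}^{[2]}-2)>0$ and thus requires only $\alpha_{t-1}^{[2]}>2$. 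This is precisely where the hypothesis $\alpha_0^{[2]}>2$ enters, combined with Theorem~\ref{thm.1}(v) to propagate $\alpha_{t-1}^{[2]}>2$ to all $t$. Then the ratio $\Var{\theta_t^{[2]}\mid\mathcal{F}_{t-1}}/\Var{\theta_{t-1}^{[2]}\mid\mathcal{F}_{t-1}}$ collapses to $(q_t^{*[2]})^2(\alpha_{t-1}^{[2]}-1)^2/[(q_t^{[2]}\alpha_{t-1}^{[2]}-1)^2 \cdot q^{[2]}(\alpha_{t-1}^{[2]}-2)/(\alpha_{t-1}^{[2]}-2)] = 1/q^{[2]}$.

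The only substantive point, really, is bookkeeping: making sure the moment-existence conditions ($\alpha>1$ for the mean, $\alpha>2$ for the variance on the severity side) are preserved along the recursion, so that all quantities in the statement are finite. Theorem~\ref{thm.1}(iv)--(v) already supplies this, so once the algebraic identities above are written down the corollary follows directly, and I do not anticipate any genuine obstacle beyond careful substitution of \eqref{eq.42} and \eqref{eq.43} into the standard inverse gamma moment formulas.
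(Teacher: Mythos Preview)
Your proposal is correct and follows essentially the same route as the paper: compute the moments of the predictive law from Theorem~\ref{thm.1}(i) and of the filtering law from \eqref{eq.34}, then compare using the standard Gamma and inverse gamma moment formulas together with the definitions \eqref{eq.42}--\eqref{eq.43} (equivalently, Lemma~\ref{lem.5}). The only minor slip is a labeling issue: the positivity propagation $\alpha_t^{[1]},\beta_t^{[1]},\beta_t^{[2]}>0$ and $\alpha_t^{[2]}>1$ is stated in part~(iii) of Theorem~\ref{thm.1}, not part~(iv); your use of part~(v) for $\alpha_t^{[2]}>2$ is correct.
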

Because
$q^{[1]}, q^{[2]}\in(0,1)$, the factors $1/q^{[1]}$ and $1/q^{[2]}$ can be interpreted as variance inflating factors of frequency and severity, respectively.

\subsection{Posterior ratemaking}
This section provides the analytical expression of the posterior mean of the aggregate severity under Model \ref{mod.1}.

\begin{theorem}\label{thm.2}
 In Model \ref{mod.1}, for positive constants $\alpha_0^{[1]}, \beta_0^{[1]}, \alpha_0^{[2]}, \beta_0^{[2]}$ with $\alpha_0^{[2]}>1$.
     \item[i.] The one-step-ahead forecasting of the frequency at time $\tau$ is expressed as
      \[
      \begin{aligned}
    \E{y_\tau^{[1]} \,|\, \mathcal{F}_{\tau-1} }
    &=\lambda_\tau^{[1]}
\left[
\omega_0^{[1]}+\sum\limits_{t=1}^{\tau-1}\omega_t^{[1]}\frac{y_t^{[1]}}{\lambda_t^{[1]}}
\right]\\
\end{aligned}
\]
where
\[
\omega_0^{[1]}:=\frac{\left(q^{[1]}\right)^{\tau-1}\alpha_0^{[1]}}{\sum\limits_{t=1}^{\tau-1}\left(q^{[1]}\right)^{\tau-1-t}\lambda_{t}^{[1]} +
    \left(q^{[1]}\right)^{\tau-1} \beta_0}
\quad\hbox{and}\quad
\omega_t^{[1]}:=\frac{\left(q^{[1]}\right)^{\tau-1-t}\lambda_{t}^{[1]}}{\sum\limits_{t=1}^{\tau-1}
\left(q^{[1]}\right)^{\tau-1-t}\lambda_{t}^{[1]} +
    \left(q^{[1]}\right)^{\tau-1} \beta_0}
\]
for $t=1, \cdots, \tau-1$.
      If we further assume
    \begin{equation}
    \label{eq.cond.1}
    \eta < \log\left( \frac{q^{[1]}\beta_{\tau-1}^{[1]}+\lambda_\tau^{[1]}}{\lambda_\tau^{[1]}}\right),
    \end{equation}
      then we have the following results.
    \begin{enumerate}
    \item[ii.] The one-step-ahead forecasting of the aggregate severity at time $t$ is expressed as
    \[
    \E{y_\tau^{[2]}\,|\, \mathcal{F}_{\tau-1}}=\lambda_{\tau}^{*[2]}\E{\exp\left( \eta y_\tau^{[1]}\right)y_\tau^{[1]} \, | \, \mathcal{F}_{\tau-1}}
    \left[
    \omega_0^{[2]} + \sum\limits_{t=1}^{\tau-1}\omega_t^{[2]} \frac{y_t^{[2]}}{\lambda_t^{[2]}}
    \right]
    \]
    where the expectation on the right hand side can be computed as:
 \begin{equation}
 \label{eq.cond.2}
 \begin{aligned}
     &\E{\exp\left( \eta y_\tau^{[1]}\right)y_\tau^{[1]} \, | \, \mathcal{F}_{\tau-1}}\\
     &=q^{[1]}\alpha_{\tau-1}^{[1]}\lambda_{\tau}^{[1]}e^{\eta}
    \left(
    \frac{q^{[1]}\beta_{\tau-1}^{[1]}}
    {\lambda_{\tau}^{[1]} + q^{[1]}\beta_{\tau-1}^{[1]}-\lambda_{\tau}^{[1]}e^{\eta}}
    \right)^{q^{[1]}\alpha_{\tau-1}^{[1]}-1}
    \frac{1}
    {\left(\lambda_{\tau}^{[1]} + q^{[1]}\beta_{\tau-1}^{[1]}-\lambda_{\tau}^{[1]}e^{\eta}\right)} .
    \end{aligned}
    \end{equation}
    and
    \begin{equation}\label{eq.62}
    \omega_0^{[2]}:=\frac{\beta_0^{[2]}}{\alpha_{\tau-1}^{[2]}-1}\omega_0^{*[2]}
\quad\hbox{and}\quad
    \omega_t^{[2]}:=
    \frac{1}{\left(\alpha_{\tau-1}^{[2]}-1\right)\psi^{[2]}}\omega_t^{*[2]}
    \end{equation}
    for $t=1, \cdots, \tau-1$, where
\begin{equation}\label{eq.61}
\omega_t^{*[2]}:=
\begin{cases}
  \prod\limits_{k=t+1}^{\tau}q_k^{*[2]}, & t=0,1, \cdots, \tau-1;\\
  1, &t=\tau.
\end{cases}
\end{equation}

\end{enumerate}
\end{theorem}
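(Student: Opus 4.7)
The plan is to handle the two parts separately: part i reduces directly to a univariate result already proved, while part ii combines the tower property with a mixed-moment computation for the negative binomial distribution and an unrolling of the linear recursion for $\beta_t^{[2]}$.

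For part i, within Model \ref{mod.1} the frequency sub-process $(y_t^{[1]}, \theta_t^{[1]})$ satisfies exactly the Model \ref{mod.0} dynamics with the substitutions $(\alpha_0, \beta_0, \lambda_t, q) \mapsto (\alpha_0^{[1]}, \beta_0^{[1]}, \lambda_t^{[1]}, q^{[1]})$. This is because the updating rule $\theta_t^{[1]} = \theta_{t-1}^{[1]} B_t^{[1]}/q^{[1]}$ involves a constant $q^{[1]}$ and is exogenous with respect to the severity, as already noted in the text following Model \ref{mod.1}. Consequently Lemma \ref{cor.a1} applies verbatim and yields the stated expressions for $\omega_0^{[1]}$ and $\omega_t^{[1]}$.

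For part ii, my first step is to apply the tower property by conditioning on $y_\tau^{[1]}$:
\[
\E{y_\tau^{[2]} \,|\, \mathcal{F}_{\tau-1}} = \E{\E{y_\tau^{[2]} \,|\, \mathcal{F}_{\tau-1}, y_\tau^{[1]}} \,|\, \mathcal{F}_{\tau-1}}.
\]
By Theorem \ref{thm.1} part $ii$ the inner expectation equals $y_\tau^{[1]} \lambda_\tau^{[2]} \beta_{\tau-1}^{[2]}/(\alpha_{\tau-1}^{[2]}-1)$; after substituting $\lambda_\tau^{[2]} = \lambda_\tau^{*[2]} \exp(\eta y_\tau^{[1]})$ and pulling the $\mathcal{F}_{\tau-1}$-measurable quantities $\lambda_\tau^{*[2]}$, $\alpha_{\tau-1}^{[2]}$, $\beta_{\tau-1}^{[2]}$ out of the outer expectation one obtains a scalar multiple of $\E{\exp(\eta y_\tau^{[1]}) y_\tau^{[1]} \,|\, \mathcal{F}_{\tau-1}}$. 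The bracketed expression $\omega_0^{[2]} + \sum_{t=1}^{\tau-1} \omega_t^{[2]} y_t^{[2]}/\lambda_t^{[2]}$ is then produced by iterating the recursion $\beta_t^{[2]} = q_t^{*[2]} \beta_{t-1}^{[2]} + y_t^{[2]}/(\lambda_t^{[2]} \psi^{[2]})$ from \eqref{eq.45} down to $\beta_0^{[2]}$, which collects the nested products of \eqref{eq.61} as coefficients. This unrolling is essentially Lemma \ref{cor.a2} applied to the severity sub-process, so one could also simply invoke it.

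The main obstacle is the explicit evaluation of the mixed moment $\E{\exp(\eta y_\tau^{[1]}) y_\tau^{[1]} \,|\, \mathcal{F}_{\tau-1}}$. I would compute it as
\[
\frac{d}{dz} \E{\exp(z y_\tau^{[1]}) \,|\, \mathcal{F}_{\tau-1}} \bigg|_{z=\eta},
\]
using the negative binomial MGF from Section \ref{sec.2.1} with parameters $\lambda = \lambda_\tau^{[1]} \alpha_{\tau-1}^{[1]}/\beta_{\tau-1}^{[1]}$ and $\Gamma = q^{[1]} \alpha_{\tau-1}^{[1]}$ read off from \eqref{eq.51}. Differentiating $(\Gamma/(\Gamma + \lambda - \lambda e^z))^\Gamma$ at $z=\eta$ and simplifying delivers \eqref{eq.cond.2}. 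Finally, condition \eqref{eq.cond.1} is precisely the requirement that $\eta$ lies within the convergence region of this MGF: since $1 + \Gamma/\lambda = (q^{[1]} \beta_{\tau-1}^{[1]} + \lambda_\tau^{[1]})/\lambda_\tau^{[1]}$, the bound $\eta < \log(1 + \Gamma/\lambda)$ coincides with \eqref{eq.cond.1}, guaranteeing both that the MGF and its derivative exist and that the resulting posterior mean is finite.
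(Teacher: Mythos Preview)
Your proposal is correct and follows essentially the same approach as the paper: part~i is reduced to the univariate HF result, and part~ii combines the tower property with Theorem~\ref{thm.1}, the unrolled recursion for $\beta_{\tau-1}^{[2]}$, and the negative-binomial moment generating function. Your treatment is in fact slightly more explicit than the paper's, which simply attributes \eqref{eq.cond.2} to ``the Laplace transform formula of the negative binomial distribution'' without spelling out the differentiation or the convergence-region check.
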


\begin{remark}
    From the recursion \eqref{eq.45}, we can be easily check that
 for any $\tau=1, 2, \cdots$.
\begin{equation}
\label{transformrecursion}
\begin{cases}
  \alpha_{\tau}^{[1]}=\left( q^{[1]}\right)^{\tau}\alpha_{0}^{[1]}+ \sum\limits_{t=1}^{\tau}
  \left( q^{[1]}\right)^{\tau-t}y_t^{[1]};\\
  \beta_{\tau}^{[1]}=\left( q^{[1]}\right)^{\tau}\beta_{0}^{[1]}+ \sum\limits_{t=1}^{\tau}
  \left( q^{[1]}\right)^{\tau-t}\lambda_t^{[1]};\\
  \alpha_{\tau}^{[2]}=\left( q^{[2]}\right)^{\tau}\left( \alpha_{0}^{[2]}-2\right)+
  \sum\limits_{t=1}^{\tau} \left( q^{[2]}\right)^{\tau-t}\frac{y_t^{[1]}}{\psi^{[2]}}+2;\\
  \beta_{\tau}^{[2]}=\omega_0^{*[2]} \beta_0^{[2]}+\sum\limits_{t=1}^{\tau} \omega_t^{*[2]} \frac{y_t^{[2]}}{\lambda_t^{[2]}\psi^{[2]}}
\end{cases}
\end{equation}
where $\omega_t^{*[2]}$'s are defined in \eqref{eq.61}.

\end{remark}

\begin{proof}
Part $i$, which is first derived in HF,  is obtained by simply using the first two recursions of \eqref{eq.45} to express $\alpha_t^{[1]}$ and $\beta_t^{[1]}$ in terms of $(\lambda_s)_s$ and $(y_s^{[1]})_s$.

  For part $ii$, from Theorem \ref{thm.1}, we have
    \begin{equation}\label{eq.620}
  \begin{aligned}
    \E{y_\tau^{[2]}\,|\, \mathcal{F}_{\tau-1}}&=\E{\E{y_\tau^{[2]} \,|\, \mathcal{F}_{\tau-1}, y_\tau^{[1]}}\,|\, \mathcal{F}_{\tau-1}}\\
    &=\frac{\lambda_{\tau}^{*[2]}}{\alpha_{\tau-1}^{[2]}-1}
    \E{\exp\left( \eta y_\tau^{[1]}\right)y_\tau^{[1]} \, | \, \mathcal{F}_{\tau-1}}
    \beta_{\tau-1}^{[2]}\\
    &=\frac{\lambda_{\tau}^{*[2]}}{\alpha_{\tau-1}^{[2]}-1}
    \E{\exp\left( \eta y_\tau^{[1]}\right)y_\tau^{[1]} \, | \, \mathcal{F}_{\tau-1}}
    \left(
    \omega_0^{*[2]} \beta_0^{[2]}+\sum\limits_{t=1}^{\tau-1} \omega_t^{*[2]} \frac{y_t^{[2]}}{\lambda_t^{[2]}\psi^{[2]}}
    \right).
  \end{aligned}
  \end{equation}
  where the last equality is from the above equation \eqref{transformrecursion}.  Finally, equation \eqref{eq.cond.2} is a consequence of the Laplace transform formula of the negative binomial distribution.

\end{proof}

\begin{remark}
Assumption in \eqref{eq.cond.1} is necessary to ensure that the expectation in \eqref{eq.cond.2} is finite, so that the total claim amount is of finite mean. In other words, the dependence between frequency and severity components should either be negative ($\eta<0$), or not excessively positive. This assumption is typically satisfied in auto insurance, since empirical studies often find negative, or weakly positive frequency-severity dependence \citep{shi2015dependent, park2018does, lu2019flexible}.


\end{remark}

The following theorem is an analog of Corollary 1 and shows that the one step-ahead forecasting can be represented as the linear combination of claims of the aggregate severities with timely ordered weights.

\begin{theorem}
      Consider the settings in Model \ref{mod.1}, for a positive constant $\alpha_0^{[1]}, \beta_0^{[1]}, \alpha_0^{[2]}, \beta_0^{[2]}$ with $\alpha_0^{[2]}>1$. If we further assume assumption in \eqref{eq.cond.1}, and
      \[
      \lambda_1^{[1]}=\cdots=\lambda_{\tau-1}^{[1]}\quad\hbox{and}\quad \lambda_1^{[2]}=\cdots=\lambda_{\tau-1}^{[2]},
      \]
       then we have the following results:
    \[
    \omega_1^{[1]}<\cdots< \omega_{\tau-1}^{[1]}\quad\hbox{and}\quad
    \omega_1^{[2]}<\cdots< \omega_{\tau-1}^{[2]}.
    \]
\end{theorem}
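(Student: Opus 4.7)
The plan is a direct, term-by-term comparison of the closed-form expressions for $\omega_t^{[1]}$ and $\omega_t^{[2]}$ given in Theorem \ref{thm.2}, treating the two sequences separately. In both cases the argument reduces to showing that a single consecutive-ratio quantity is bounded below by $1$.

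\textbf{Frequency weights.} Under the hypothesis $\lambda_1^{[1]}=\cdots=\lambda_{\tau-1}^{[1]}$, the denominator appearing in the formula for $\omega_t^{[1]}$ depends only on $\tau$, not on $t$. Hence the ordering is entirely controlled by the numerator $(q^{[1]})^{\tau-1-t}\lambda_t^{[1]}$, and the consecutive ratio collapses to
\[
\frac{\omega_t^{[1]}}{\omega_{t-1}^{[1]}} \;=\; \frac{(q^{[1]})^{\tau-1-t}}{(q^{[1]})^{\tau-t}} \;=\; \frac{1}{q^{[1]}} \;>\; 1,
\]
since $q^{[1]}\in(0,1)$. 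This yields the strict chain $\omega_1^{[1]}<\cdots<\omega_{\tau-1}^{[1]}$ immediately.

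\textbf{Severity weights.} From \eqref{eq.62}, $\omega_t^{[2]}=C\,\omega_t^{*[2]}$ with $C:=[(\alpha_{\tau-1}^{[2]}-1)\psi^{[2]}]^{-1}>0$ a constant that does \emph{not} depend on $t$ (only on the filter value at time $\tau-1$). Hence the ordering of $\omega_t^{[2]}$ is equivalent to the ordering of $\omega_t^{*[2]}=\prod_{k=t+1}^{\tau}q_k^{*[2]}$, and the consecutive ratio is $\omega_t^{*[2]}/\omega_{t-1}^{*[2]}=1/q_t^{*[2]}$. So the claim reduces to verifying $q_t^{*[2]}\in(0,1)$ for every $t=1,\ldots,\tau-1$. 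Plugging into the definition \eqref{eq.43}, this is equivalent to $\alpha_{t-1}^{[2]}>2$, which I would propagate inductively from the initial condition using part $v$ of Theorem \ref{thm.1}. Note that the hypothesis $\lambda_1^{[2]}=\cdots=\lambda_{\tau-1}^{[2]}$ is not actually used in the severity ordering, since the formula for $\omega_t^{[2]}$ does not involve $\lambda_t^{[2]}$; it is presumably stated for parallelism and for the downstream interpretation of the posterior ratemaking formula.

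\textbf{Main subtle point.} The argument is conceptually straightforward; the one technicality is that $q_t^{*[2]}<1$ requires the strict inequality $\alpha_{t-1}^{[2]}>2$ (consistent with Lemma \ref{lem.5} and the finite-variance conditions in Section 4), whereas the theorem as written only assumes $\alpha_0^{[2]}>1$. I would therefore either strengthen the initial hypothesis to $\alpha_0^{[2]}>2$ and then close the induction via part $v$ of Theorem \ref{thm.1}, or, if the weaker hypothesis must be retained, verify that the Bayesian update $\alpha_t^{[2]}=q^{[2]}(\alpha_{t-1}^{[2]}-2)+2+y_t^{[1]}/\psi^{[2]}$ eventually pushes the shape parameter above $2$ once a strictly positive $y_t^{[1]}$ is observed, and treat the earlier weights as a boundary case. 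Apart from this, no nontrivial obstacle arises.
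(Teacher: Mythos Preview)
Your argument matches the paper's proof: the frequency ordering is dispatched by the same consecutive-ratio computation (the paper simply invokes Corollary \ref{cor.a3}, which is exactly that computation), and for the severity part the paper likewise reduces to showing $q_t^{*[2]}<1$ and then uses the product definition \eqref{eq.61} of $\omega_t^{*[2]}$ just as you do. Your caveat is well taken: the paper asserts that $\alpha_t^{[2]}>1$ already forces $q_t^{*[2]}<1$, but as you correctly note, \eqref{eq.43} actually requires $\alpha_{t-1}^{[2]}>2$, so your proposed strengthening to $\alpha_0^{[2]}>2$ together with part $v$ of Theorem \ref{thm.1} is the cleaner route and is in fact more careful than the paper's own treatment.
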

\begin{proof}
First, the following inequality
\[
\omega_1^{[1]}<\cdots< \omega_{\tau-1}^{[1]}
\]
is immediate result of Corollary \ref{cor.a3}.
Now, we prove
\[
\omega_1^{[2]}<\cdots< \omega_{\tau-1}^{[2]}.
\]
    From Theorem \ref{thm.1}, we know that $\alpha_0^{[2]}>1$ implies $\alpha_t^{[2]}>1$ for $t=1, 2, \cdots$, which further implies
    \begin{equation}\label{eq.63}
    q_t^{*[2]}<1, \quad t=1, 2, \cdots.
    \end{equation}
which, together with the definition of $\omega_t^{*[2]}$ in \eqref{eq.61}, implies
    \[
    \omega_1^{*[2]}<\cdots<\omega_\tau^{*[2]}.
    \]
    Hence
        \[
    \omega_1^{[2]}<\cdots< \omega_{\tau-1}^{[2]}.
    \]
\end{proof}

\section{Alternative updating rules of the frequency-severity model}
The major appeal of SM and HF models is that they have similar, simple updating rule for the random effect [see \eqref{up.1}, \eqref{up.2}, \eqref{eq.c1}, and \eqref{eq.c2}]. However, to emphasize on the flexibility of the proposed Bayesian state-space model, this section discusses various alternative transition equations of the random effect which leads to different updating rules of the random effect, and hence forecasting formula.

\subsection{Exponentially weighted moving average for severity part}
Just as the SM model specified in section 2.3.2, Model \ref{mod.1} does not allow EWMA predictive formula for the severity variable. Let us therefore investigate how the alternative specification proposed in section 2.3.3. can be adapted to the frequency-cost model.




More precisely, we replace the specification of $q_t^{[2]}$ in \eqref{eq.42} and that of $q_t^{*[2]}$ in \eqref{eq.43} by
\[
q_t^{[2]}:=\frac{q^{[2]}(\alpha_{t-1}^{[2]}-1)+1}{\alpha_{t-1}^{[2]}}\quad\hbox{and}\quad
q^{*[2]}_t=q^{[2]},
\]
respectively. Then the forecasting formula of the frequency variable remains unchanged. As for the aggregate severity, the one-step-ahead forecasting of the aggregate severity at time $\tau$ is given by
\[
\begin{aligned}
    &\E{y_\tau^{[2]}\,|\, \mathcal{F}_{\tau-1}}\\
    &=\lambda_{\tau}^{*[2]}\E{\exp\left( \eta y_\tau^{[1]}\right)y_\tau^{[1]} \, | \, \mathcal{F}_{\tau-1}}
    \left[
    \frac{\left( q^{[2]}\right)^{\tau-1}\beta_0^{[2]}+\sum\limits_{t=1}^{\tau-1}
    \left( q^{[2]}\right)^{\tau-1-t}\frac{y_t^{[2]}}{\lambda_t^{[2]}\psi^{[2]}}
    }
    {\left( q^{[2]}\right)^{\tau-1}\alpha_0^{[2]}-\left( q^{[2]}\right)^{\tau-1}+
    \frac{1}{\psi^{[2]}}\sum\limits_{t=1}^{\tau-1}
    \left( q^{[2]}\right)^{\tau-1-t}y_t^{[1]}
    }
    \right]
\end{aligned}
\]
in case \eqref{eq.cond.1} is satisfied.
Clearly, the one-step-ahead forecasting of the aggregate severity has EWMA.

One extra difficulty of adapting the variant of SM model to the frequency-severity model is the existence of the one-step-ahead variance of the random effect unlike Corollary \ref{cor.1} in case of Model \ref{mod.1}.
Specifically, an analog of equation \eqref{qalpha} is:
$$
q^{[2]}_{t}\alpha_{t-1}^{[2]}=q^{[2]} q^{[2]}_{t-1}\alpha_{t-2}^{[2]} +q^{[2]} (\frac{y^{[1]}_{t}}{\psi^{[2]}}-1)+1.
$$
Thus, when $y^{[1]}_{t}$'s take values of zeros, even if
$$q^{[2]}_{t-1}\alpha_{t-2}^{[2]} >2, $$
the sequence $(q^{[2]}_{t}\alpha_{t-1})$ can take value smaller than $2$ for properly large $t$, which leads to infinite variance of $\theta_t^{[2]}$ conditional on $\mathcal{F}_t$ unlike in Corollary \ref{cor.1}.
 Intuitively, this is due to the fact in this variant of Model 3, $(\theta^{[2]}_t)$ is assumed to vary, even during a period when no claim is observed. It is shown in the next subsection that this ``forced" evolution also exists in the plain Model 3, which motivates a second variant proposed below.


\subsection{A three-part variant}
The frequency-severity model has the characteristics that for those periods where no claims are reported $(y_t^{[1]}=0)$, the claim amount $y_t^{[2]}$ is automatically zero.  In other words, the individual claim size can be viewed as \textit{unobserved}. One of the biggest advantages of state-space models is that they are very convenient, and flexible, to deal with missing observations. The aim of this section is to explain how this flexibility can be easily explored to adapt Model 3 and allow for different treatments of these non-claim periods.

 First, remark that in Model 3 even if $y_t^{[1]}=0$, the random effect $\theta_t^{[2]}$ will still be updated to $\theta_{t+1}^{[2]}$, with a different predictive distribution.  
Indeed, by equation \eqref{eq.41}, these predictive distribution are:
$$
\theta_t^{[2]} \vert \mathcal{F}_{t-1} \sim F_{\rm Gamma}\left( \cdot; q_t^{[2]}\alpha_{t-1}^{[2]}, q_t^{*[2]}\beta_{t-1}^{[2]}\right),
$$
and
$$
\theta_{t+1}^{[2]} \vert \mathcal{F}_{t} \sim F_{\rm Gamma}\left( \cdot; q_{t+1}^{[2]}\alpha_{t}^{[2]}, q_{t+1}^{*[2]}\beta_{t}^{[2]}\right),
$$
respectively. Then we have:
\begin{align*}
q_{t+1}^{[2]}\alpha_{t}&=q^{[2]}(\alpha_{t}^{[2]}-2)+2 \qquad  \qquad \text{     (by equation \eqref{eq.42}) } \\
&= q^{[2]}(q^{[1]}\alpha_{t-1}^{[2]}-2)+2  \qquad  \qquad \text{     (by equation \eqref{eq.45}  and $y_t^{[1]}=0$) } \\
&\neq q^{[2]}(\alpha_{t-1}^{[2]}-2)+2 \qquad  \qquad \text{     (since  $q^{[1]}<1$)}  \\
& = q_{t}^{[2]}\alpha_{t-1},  \qquad  \qquad \text{     (by equation \eqref{eq.42}) }
\end{align*}
Similarly, we can check that $$
q_{t+1}^{*[2]}\beta_{t}^{[2]} \neq q_t^{*[2]}\beta_{t-1}^{[2]}.
$$

In other words, Model 3 assumes that even for a non-claim period, its associated random effect continues to evolve.  While this assumption is acceptable, it is interesting to check whether alternative updating rules can be applied. In particular, is it possible to not update the distribution of the random effect $\theta_t^{[2]} $ if $y_t^{[1]}$ is not observed? The answer is affirmative. Indeed, this amounts to slightly change Model 3 and distinguish the updating rule, according to whether or not $y_t^{[1]}=0$:
\begin{itemize}
\item If this equality does not hold, then we use the same updating rule \eqref{eq.36}.
\item If instead $y_t^{[1]}=0$, then we keep updating the frequency random effect through the same rule $$\theta_t^{[1]} =\theta_{t-1}^{[1]}\frac{B_t^{[1]}}{q^{[1]}},$$
but keep the same value for the severity random effect:
$$
\theta_t^{[2]} =\theta_{t-1}^{[2]}.
$$
\end{itemize}

This variant of Model 3, which proposes a distinct treatment of the case $y^{[1]}_t=0$, has a nice interpretation in terms of two-part models.  In the recent insurance literature, the term ``two-part" refers to a model with two response variables frequency (count valued) and severity (continuously valued) per period.  Interestingly ,  ``two-part" models first appeared in the econometric literature (for applications such as the health care cost of different individuals), where it is used for single, nonnegative response variable, whose distribution has a point mass at zero. In other words, there, the two parts are the positive expense (in which case only the total expense is observed) and non-expense cases \citep{cragg1971some,mullahy1998much}. This type of zero/continuous two-part models was later introduced in credit risk [see e.g. \cite{tong2013zero}] to disentangle positive loss with zero loss in the case of default,  as well as in insurance \citep{frees2011predicting, frees2013actuarial, shi2018pair,yang2020nonparametric} to distinguish between claim and non-claim cases.   Later on, this terminology evolved in the insurance literature and is now also used for models analyzing separately the frequency and severity components of the claims, where frequency is count, instead of binary valued \citep{shi2020regression, oh2021copula}.

Thus the above variant can be interpreted as a ``three-part" model by mixing the ideas of the claim/non-claim, and frequency/severity two-part models. More precisely, on the one hand, frequency and severity variables are both observed, have their own random effects and regression equations; on the other hand, when there is no claim, we allow for a potentially different updating rule of the severity component, by explicitly acknowledging that in this case, the claim amounts are \textbf{not} observed and hence its associated random effects needs not necessarily be updated.  This terminology of three-part model is first introduced by \cite{shi2015dependent}, but their model is designed for cross-sectional data only.  Here, our three-part model is dynamic, and just as Model 3, it also allows for closed form predictive formulas.
\subsection{Combining EWMA and three-part variants}
Finally, the two variants proposed above can themselves be combined. That is, on the one hand, we keep the specification of the transition equations in the EWMA variant for periods when there is claim, but do not update the severity random effect when there is no claim. Such a combined variant will allow for $(\theta^{[2]}_t)$ to evolve as a martingale, while at the same time guarantee the existence of its predictive variance. These details are straightforward to prove and are omitted.
\subsection{Comparison with Model 3}
To summarize, we have proposed various variants for model 3. Even though all these models are tractable and reasonable, they have their own advantages and downsides. On the one hand, Model 3 is mathematically simpler, since the case $y^{[1]}_t=0$ is treated indifferently from its opposite case; on the other hand, the advantage of some its variant is that by letting the updating rule of the severity component depend on the sign of $y^{[1]}_t$, we have implicitly introduced a new channel of
 the \textit{dependence} between the frequency and severity parts, which is different from putting $y_t^{[1]}$ in the regression equation of $y_t^{[2]}$. This partly alleviates the  weakness of assuming $\theta_t^{[1]}$ and $\theta_t^{[2]}$ independent given the past $\mathcal{F}_{t-1}$ in Model 3. Indeed, we are not aware of any bivariate distribution which $i)$ has Gamma marginal distributions $ii)$ allows for non-trivial dependence between the two components $iii)$ is conjugate prior to the bivariate conditional distribution specified in equations \eqref{eq.38} and \eqref{eq.39}. This difficulty has also been acknowledged by \cite{abdallah2016sarmanov}, who show that in a bivariate frequency process, when the Sarmanov distribution is used to couple two random effect processes with Gamma marginal densities, the conjugacy property is lost and numerical approximation is necessary in order to obtain a tractable predictive mean.
\section{Empirical Analysis}

To assess the novelty of the proposed method, we perform an empirical analysis using a real dataset. For comparison, we introduce three benchmark models on top of the proposed model, which are special or limiting cases of Model \ref{mod.1} as follows:
\begin{itemize}
	\item Naive - No consideration is given to possible dependence between frequency and severity nor serial dependence among the claims of the same policyholder so that the expected compound loss is given as $\lambda_t^{[1]}\lambda_t^{*[2]}$. Note that this is equivalent to Model \ref{mod.1} where $q^{[1]}=q^{[2]}=1$, $\alpha_0^{[1]}=\beta_0^{[1]}=\alpha_0^{[2]}=\beta_0^{[2]}=\infty$, and $\eta=0$.
	\item DGLM - It only considers possible between frequency and severity but there is no serial dependence as in \citet{garrido2016generalized} so that the expected compound loss is given as $\lambda_t^{[1]}\lambda_t^{*[2]}\exp\left(\lambda_t^{[1]}(e^\eta-1)+\eta\right)$. Note that this is equivalent to Model \ref{mod.1} where $q^{[1]}=q^{[2]}=1$ and $\alpha_0^{[1]}=\beta_0^{[1]}=\alpha_0^{[2]}=\beta_0^{[2]}=\infty$.
	\item Static - Model \ref{mod.1} where $q^{[1]}=q^{[2]}$ are fixed as 1 and $\alpha_0^{[1]}, \alpha_0^{[2]}$ are estimated by maximizing the joint likelihood. Note that this is equivalent to the static credibility premium in \citet{jeong2020predictive} and \citet{jeong2020bregman}. In order to assure $\E{\theta_0^{[1]}}=\E{\theta_0^{[2]}}=1$, $\beta_0^{[1]}$ and $\beta_0^{[2]}$ are set as $\alpha_0^{[1]}$ and $\alpha_0^{[2]}-1$, respectively.

	\item Proposed - Model \ref{mod.1} where the dependence parameters are estimated by maximizing the joint likelihood with respect to $q^{[1]},q^{[2]},\alpha_0^{[1]}$, and $\alpha_0^{[2]}$. Again, in order to assure $\E{\theta_0^{[1]}}=\E{\theta_0^{[2]}}=1$, $\beta_0^{[1]}$ and $\beta_0^{[2]}$ are set as $\alpha_0^{[1]}$ and $\alpha_0^{[2]}-1$, respectively.
\end{itemize}

To calibrate the models, we used LGPIF (Local Government Property Insurance Fund) data from the state
of Wisconsin that has been widely used in actuarial literature including but not limited to \citet{lee2019dependent} and  \citet{yang2020nonparametric}. The dataset contains policy characteristics and observed claim on multiple lines of insurance including building and contents, inland marine (IM), and so on. In our work, we focused on the observed IM claims that consists of 5,677 observations over 5 years (2006--2010) for 1,234 policyholders. Note that the observations from year 2011 are set aside for out-of-sample validation. For detailed information and description, see \citet{frees2016multivariate} while brief summary statistics of the observed covariates are given in Table \ref{tab:datasummary}.

\begin{table}[h!t!]
\begin{center}
\caption{Observable policy characteristics used as covariates} \label{tab:datasummary}
\resizebox{!}{3cm}{
\begin{tabular}{l|lrrr}
\hline \hline
Categorical & Description &  & \multicolumn{2}{c}{Proportions} \\
variables \\
\hline
TypeCity & Indicator for city entity:           & Y=1 & \multicolumn{2}{c}{14.00 \%} \\
TypeCounty & Indicator for county entity:       & Y=1 & \multicolumn{2}{c}{5.78 \%} \\
TypeMisc & Indicator for miscellaneous entity:  & Y=1 & \multicolumn{2}{c}{11.04 \%} \\
TypeSchool & Indicator for school entity:       & Y=1 & \multicolumn{2}{c}{28.17 \%} \\
TypeTown & Indicator for town entity:           & Y=1 & \multicolumn{2}{c}{17.28 \%} \\
TypeVillage & Indicator for village entity:     & Y=1 & \multicolumn{2}{c}{23.73 \%} \\
\hline
 Continuous & & Minimum & Mean & Maximum \\
 variables \\
\hline
CoverageIM  & Log coverage amount of IM claim in mm  &  0 & 0.85
            & 46.75\\
lnDeductIM  & Log deductible amount for IM claim     &  0 & 5.34
            & 9.21\\
\hline \hline
\end{tabular}}
\end{center}
\end{table}

These observable policy characteristics are incorporated in risk classification via the associated parameters so that $\lambda_t^{[1]} = \exp(\mathbf{x}_t \zeta^{[1]})$ and $\lambda_t^{*[2]} = \exp(\mathbf{x}_t \zeta^{[2]})$.

For estimation of the parameters, we take a two-step approach so that the regressions parameters are estimated first due to the common mean structure, and the dependence structure is considered later. In other words, we fit a usual Poisson GLM to estimate $\zeta^{[1]}$ and Gamma GLM to estimate $\zeta^{[2]}$ and $\eta$. Such a sequential approach allows us to focus on the impact of dependence structure on the premium calculation. Table \ref{tab:fixed} summarizes the estimated values of $\zeta^{[1]}$, $\zeta^{[2]}$ , and $\eta$. Since $\eta=0$ is imposed in Naive model, there are two GLM's for the severity component where the second model does not preclude non-zero value of $\eta$. Note that negative value for estimated coefficient for ``Count'', $\eta$, implies there is negative dependence between the number of claim and average amount of claim in a year.



\begin{table}[!h]
\centering
\caption{Regression estimates of the fixed effects} \label{tab:fixed}
\begin{tabular}{lrrrrrr}
\toprule
\multicolumn{1}{c}{ } & \multicolumn{2}{c}{Frequency} & \multicolumn{4}{c}{Severity} \\
\cmidrule(l{3pt}r{3pt}){2-3} \cmidrule(l{3pt}r{3pt}){4-7}
\multicolumn{1}{c}{ } & \multicolumn{2}{c}{ } & \multicolumn{2}{c}{Independent} & \multicolumn{2}{c}{Dependent} \\
\cmidrule(l{3pt}r{3pt}){4-5} \cmidrule(l{3pt}r{3pt}){6-7}
  & Estimate & p-value & Estimate & p-value & Estimate & p-value\\
\midrule
(Intercept) & -4.2571 & 0.0000 & 9.4271 & 0.0000 & 10.1312 & 0.0000\\
TypeCity & 0.9673 & 0.0000 & 1.0040 & 0.0339 & 1.3007 & 0.0012\\
TypeCounty & 1.8747 & 0.0000 & 1.3164 & 0.0060 & 1.3152 & 0.0009\\
TypeMisc & -2.7453 & 0.0068 & -1.3007 & 0.6018 & -1.2977 & 0.5252\\
TypeSchool & -0.9174 & 0.0008 & -0.2046 & 0.7617 & -0.1191 & 0.8300\\
TypeTown & -0.4129 & 0.1254 & 0.2571 & 0.6981 & 0.3588 & 0.5089\\
CoverageIM & 0.0687 & 0.0000 & 0.0008 & 0.9712 & 0.0332 & 0.0746\\
lnDeductIM & 0.1363 & 0.0030 & -0.1336 & 0.4481 & -0.1744 & 0.2278\\
Count &  &  &  &  & -0.4538 & 0.0000\\
\hhline{=======}
\end{tabular}
\end{table}

After the models are calibrated, one can compare the prediction performance of each model by using out-of-validation set. We used root-mean-square error (RMSE) and mean absolute error (MAE) for quantifying discrepancies between the actual and predicted values via $L_2$ and $L_1$ norms, respectively. When we consider the credibility factors multiplied to the prior means, $\frac{\alpha_t^{[1]}}{\beta_t^{[1]}} \frac{\beta_t^{[2]}}{\alpha_t^{[2]}-1}$, they are capped at 250\% to mimic usual practice in posterior ratemaking.  We also compared the average of predictive values and the actual claim amounts for year 2011 for each model. As shown in Table \ref{tab:validation}, the proposed model is the best in terms of MAE, second best in terms of RMSE, and also closely matches the overall portfolio mean of the actual claims compared to the other benchmarks.



\begin{table}[h!]
\centering
\caption{Out-of-sample Validation Result} \label{tab:validation}
\begin{tabular}{c|ccccc}
\hline \hline
                       & Naive      & DGLM    & Static       & Proposed   & Actual   \\ \hline
RMSE               & 9272.96  & 6433.21  & 6340.25  & 6389.32  &       -       \\
MAE                 & 1345.41  & 1241.95  & 1101.32  & 1085.89  &       -       \\
Portfolio Mean   & 310.56    & 384.55   & 522.91    & 549.00    & 645.25    \\
\hline \hline
\end{tabular}
\end{table}





Finally, one can see that the ages of claims affect credibility factors of each model in different ways. Let us consider a hypothetical policyholder whose $\lambda_t^{[1]}=0.2$ and $\lambda_t^{[2]}=15000$ for $t=1,\ldots, 5$. For simplicity, we further assume that $\alpha_0^{[1]}=1$, $\alpha_0^{[2]}=3$, $\eta=0$, and $\psi=1.5$. Suppose that there was exactly one claim by the policyholder in one of the years 1 through 4. As shown in Table \ref{tab:hypo}, we observe decaying effects of past claims in the calculation of credibility factors under the dynamic credibility model, whereas the static model disregards such information.

\begin{table}[H]
\centering
\caption{Variation of credibility factors at year 5 under different models} \label{tab:hypo}
\begin{tabular}{ccccc}
\toprule
\multicolumn{1}{c}{ } & \multicolumn{2}{c}{Dynamic Credibility ($q^{[1]}=q^{[2]}=0.8$)} & \multicolumn{2}{c}{Static Credibility ($q^{[1]}=q^{[2]}=1$)} \\
\cmidrule(l{3pt}r{3pt}){2-3} \cmidrule(l{3pt}r{3pt}){4-5}
Claim year & Frequency & Severity & Frequency & Severity\\
\midrule
1 & 0.9216 & 1.0309 & 1.1111 & 1.0309\\
2 & 1.0496 & 1.0347 & 1.1111 & 1.0309\\
3 & 1.2096 & 1.0385 & 1.1111 & 1.0309\\
4 & 1.4096 & 1.0421 & 1.1111 & 1.0309\\
\hhline{=====}
\end{tabular}
\end{table}

\section{Conclusion}
In this paper we have introduced a new dynamic collective risk model to the insurance literature. The model comes with time-varying random effect processes,which allows to account for the seniority of claims in the prediction formula. This latter has many nice features. First of all it is available in closed form, thanks to a careful combination of ideas from the time series literature on univariate processes of counts and positive real numbers, respectively. Secondly, the predictive mean is a simple function of previous claim frequency/severity whose accounts for the seniority of the claim. Finally, we have illustrated the usefulness of our model using a longitudinal database.

\section*{Acknowledgements}

Jae Youn Ahn was supported by a National Research Foundation of Korea (NRF) grant funded by the Korean Government (2020R1F1A1A01061202). Yang Lu thanks NSERC through a discovery grant (RGPIN-2021-04144, DGECR-2021-00330). Himchan Jeong was supported by the Simon Fraser University New Faculty Start-up Grant (NFSG).

\bibliographystyle{apalike}
\bibliography{CTE_Bib_HIX}


\end{document}